\def\@seccntformat#1{\csname the#1\endcsname.\quad}
\providecommand{\keywords}[1]{\small{\textit{Keywords:}} #1}
\DeclareMathOperator*{\argmax}{\text{argmax}}
\renewcommand\max{\mathop{\text{max}}\limits}
\renewcommand\min{\mathop{\text{min}}\limits}
\renewcommand\lim{\mathop{\text{lim}}\limits}
\tikzstyle{every picture} += [>=stealth]
\tikzset{axis/.style={semithick, line join=miter}}
\def\coursename#1{%
  \def\ctemp{#1}%
  \ifx\ctemp\@empty
  \def\insertcoursename{}
  \else
  \def\insertcoursename{\ignorespaces#1}
  \fi
}
\newtheorem{theorem}{{\bfseries\sffamily Theorem}}
\newtheorem{corollary}{Corollary}
\newtheorem{definition}{Definition}
\newtheorem{example}{Example}
\newtheorem{lemma}{Lemma}
\newtheorem{proposition}{Proposition}
\newtheorem{remark}{Remark}
\newcommand{\mathbbm}[1]{\text{\usefont{U}{bbm}{m}{n}#1}} %
\titleformat*{\section}{\sffamily\Large\bfseries}
\titleformat*{\subsection}{\sffamily\large\bfseries}
\titleformat*{\subsubsection}{\sffamily\large\bfseries}
    \def\@fnsymbol#1{\ensuremath{\ifcase#1\or *\or \dagger\or \ddagger\or
   \mathsection\or \mathparagraph\or \|\or **\or \dagger\dagger
   \or \ddagger\ddagger \else\@ctrerr\fi}}
\title{\LARGE\bfseries\sffamily 
{\color{black}Restricting Entries to All-Pay Contests}\thanks{The first two authors contributed equally. An extended abstract of this paper appeared in the \emph{Proceedings of the 25th ACM Conference on Economics and Computation (EC'24)}. We thank the review team for their valuable comments, which have helped us significantly improve the paper.

}
}
\begin{document}
\author{Fupeng Sun\thanks{Imperial Business School, Imperial College London, f.sun23@imperial.ac.uk.}
\and Yanwei Sun\thanks{Imperial Business School, Imperial College London, yanwei@imperial.ac.uk.}
\and Chiwei Yan\thanks{Department of Industrial Engineering and Operations Research,
UC Berkeley, chiwei@berkeley.edu.}
\and Li Jin\thanks{UM-SJTU Joint
Institute, Shanghai Jiao Tong University, li.jin@sjtu.edu.cn.}}

\date{}
\maketitle

\begin{abstract}

We study an all-pay contest in which players with low abilities are filtered out before competing for prizes. %
We consider a setting where the designer admits a certain number of top players. The admitted players update their beliefs based on the signal that their abilities are among the top, which leads to posterior beliefs that, even under i.i.d. priors, are correlated and depend on each player's private ability. 
We find that all effects of this elimination mechanism---including the reduction in the number of admitted players and the resulting updated beliefs---are captured by an \textit{inflated ability}. 
A symmetric and strictly increasing equilibrium strategy exists if and only if this inflated ability is increasing in the player's true ability. Under this condition, we explicitly characterize the unique strictly increasing Bayesian equilibrium strategy.
Focusing on a winner-take-all prize structure, we find that each admitted player's effort strictly decreases as the admitted number increases. As a result, it is optimal to admit only two players in terms of maximizing the expected highest effort.
Finally, in a two-stage extension, we find that there does not exist a symmetric and strictly increasing equilibrium strategy.
\end{abstract}

\keywords{all-pay contests; entry restriction; incomplete information; correlated and type-dependent beliefs.
}

{\normalsize

\section{Introduction}
\label{sec:intro}

All-pay contests are scenarios where players invest costly effort to compete for valuable prizes. A common practice among contest designers is to restrict entries of players.
One of the main reasons is that the elimination phase can enhance the contest's competitiveness, sometimes motivating contestants to exert greater effort (see, e.g.,  \citealt{Moldovanu_2006_Contest_Architecture_Sub-elimination, brown_ms_2014_eliminationtournament, fu_aermicro_2022_disclosure_sec}).

As a first example, research funding agencies, such as the National Science Foundation in the US, often have very limited resources of reviewers and panelists, and they thus need to filter potential low-quality submissions without going through the costly formal review process. This is often done by asking proposers to submit a brief letter of intent documenting the proposers' names and affiliations, proposal title, and a synopsis of the topic, prior to full proposal submissions. Proposals whose chances of success are very small are discouraged from formal submissions.\footnote{See details at \url{https://www.nsf.gov/pubs/policydocs/pappguide/nsf14001/gpg_1.jsp}.}  %
Funding agencies also want proposers to exert more effort in proposal writings, which helps them
to think about the big picture and put their projects into perspective, ultimately leading to better and socially more valuable research.\footnote{We acknowledge that sometimes, funding agencies might want to identify the best applicant. This goal is not explicitly considered in our model.}

Another example is the RoboMaster competition, an annual intercollegiate robotics contest held in Shenzhen, China. %
The event aims to maximize audience engagement by creating a highly competitive environment that motivates participants to invest significant effort.\footnote{There are two main reasons the organizers seek high efforts from participants. First, their mission is to ``lead global robotics competitions and drive the development of the robotics industry through the most rigorous competition rules'', see details at \url{https://www.robomaster.com/en-US/robo/overview?djifrom=nav}. Achieving this goal demands substantial effort from teams to keep the competition engaging and impactful. Second, the same webpage describes various spin-offs, which rely on intense competition to attract a broader audience and remain commercially viable.
}
To this end, they select only the most promising teams during the registration phase, a process facilitated by screening resumes to assess each team's potential. Moreover, by disclosing the number of total as well as admitted teams, the designers provide participants with a gauge of the contest's competitiveness.

\smallskip

Despite the prevalent use of entry restrictions, 
the impact of such elimination on players' efforts is unclear.
The key trade-off is between the reduced number of players admitted into the contest and players' updated beliefs about other players' abilities. For example, if the designer only admits $2$ players among all $100$ players, the two players admitted know their opponents have very high abilities, thus to win the contest, both of them might want to exert more effort. On the other hand, considering there are only $2$ players in the contest might also discourage them from exerting efforts. Without detailed analysis, it is not clear a priori which force plays a more important role. 

\smallskip
 
\noindent\textbf{Model and Results.} 
To address the above question, we model the contest as an all-pay auction and analyze a setting in which players---with heterogeneous private abilities drawn independently and identically from a common distribution---compete for prizes by exerting effort.

Ability represents a player’s expertise in the contest domain and determines the cost of exerting effort.  Intuitively, for the same level of effort, a player with higher ability incurs a lower cost. 
The designer then selects a number of top players
among all registered participants.%

\smallskip

The key challenge and novelty of our analysis stem from characterizing and understanding the impact of players' posterior beliefs. Admitted players update their beliefs about others' abilities based on the signal that their own abilities are among the top ones out of all registered players, but they do not observe their exact rankings. Even with i.i.d.\ priors, these posterior beliefs are non-i.i.d.\ and depend on each player’s private ability. Thus, \textit{posterior beliefs are also private}, which distinguishes our analysis from standard all-pay auctions (contests). Despite the non-i.i.d.\ nature of these beliefs, they share a common structure parameterized by each player's private ability, which allows for the existence of a symmetric and strictly increasing equilibrium strategy.

\smallskip

We find that all effects of the elimination mechanism---including the reduced number of admitted players and the resulting posterior beliefs---are captured by an \emph{inflated ability}. This inflated ability is equal to the true ability divided by a term that depends on the numbers of admitted and registered players, as well as the prior distribution of ability. We show that the inflated ability always exceeds the true ability, which justifies the terminology. We further show that the inflated ability decreases as the number of admitted players increases. In other words, when fewer players are admitted, each admitted player perceives her ability to be relatively higher. However, the inflated ability need not be increasing in the true ability, and a symmetric, strictly increasing equilibrium exists if and only if the inflated ability is increasing in the true ability. Under this condition, we explicitly characterize the unique equilibrium strategy.

For ranking-based prize structures with at most two positive prizes, we show that the equilibrium efforts of \textit{all} admitted players decrease as the number of admitted players increases. Consequently, admitting only two players maximizes the highest equilibrium effort. Finally, we extend our model to a two-stage setting in which all players participate in the first stage, and only those with the top first-stage efforts advance to the second stage. In this environment, we show that no symmetric and strictly increasing perfect Bayesian equilibrium (PBE) exists, even though players are ex-ante identical. This arises because players’ second-stage beliefs depend on their first-stage efforts.

\medskip

\noindent\textbf{Related Literature.} Our paper contributes to the literature on optimal contest design under incomplete information. Much of the existing work focuses on how designers should optimally structure prizes. \citet{Moldovanu_2001_AER_WTA_Optimial} pioneer the analysis of optimal prize structures and show that, when maximizing expected total effort, a winner-take-all format is optimal among all ranking-based prize structures with nonnegative prizes. \citet{liu2018_jet_negative_prizes} and \citet{Liu_2022_optimal_reward_negativeprizes} extend this analysis to settings with negative prizes. \citet{liu2014effort_multiple_prizes} adopt a mechanism-design perspective, showing that under certain conditions a grand all-pay auction is optimal. \citet{Zhang_2024_WTA_convex_fun} study a similar environment under a fixed budget and convex disutility of effort, again identifying an all-pay mechanism as optimal. \citet{Jason_Optimal_Crowdsourcing_Contest} analyze mechanisms that maximize the expected highest effort, and \citet{goel_2023_ec_prizes} examine how prize structures and ability distributions jointly influence equilibrium efforts. In contrast to this prize-design perspective, our paper investigates how the designer can instead leverage the number of admitted players. For related work on all-pay contests with complete information, see \citet{baye1996_all_pay_complete}, \citet{barut1998_all_pay_complete}, and \citet{Baye_2012_CompleteInformation_Rank-based}; for empirical studies, see \citet{sheremeta_geb_2010_experimental_contests}, \citet{tracy_ms_2014_crowdsourcing}, and \citet{tracy_ijgt_2018_experiment_all_pay_auctions}.

\smallskip

Our paper is also related to the literature on multi-stage contests. {\color{black}\citet{2014_Segev_ejor_Equilibrium_Sequential_Incomplete} study multi-stage all-pay contests in which players arrive sequentially and later-arriving players can observe the actions of earlier ones.} \citet{kubitz_2023_aerinsights_two_stage} examine a setting related to our two-stage model, but there are several important differences: (1) their model features two players with binary private types; (2) there is no elimination after the first stage; and (3) first-stage outcomes (such as efforts) are publicly observed, whereas in our setting admitted players only observe whether they are among the top performers.

\smallskip

Furthermore, our paper contributes to the literature on signaling in contests. In our model, the entry restriction serves as a signal about players’ private abilities. The literature on signaling and information disclosure in all-pay contests is rapidly expanding. {\color{black}For instance, \citet{lu_2018_ranking_disclosure} examine settings in which the contest designer has access to players’ private abilities and compare several ranking-disclosure mechanisms.} \citet{ely_ec_2021_optimal_feedback} characterize the optimal feedback policy in contests where effort is monitored through coarse, binary signals. \citet{ersin_2023_or_dynamic_contest} study dynamic development contests that use rank-based incentives and strategic information design to intensify competition among suppliers, with the goal of reducing project lead times. Works such as \citet{chen_2017_persuasion_contests}, \citet{zhao_2023_ridge_infodesign_contests}, and \citet{chen_2024_geb_optimal_disclosure_contests} apply the Bayesian persuasion framework to analyze optimal information disclosure strategies in all-pay contests.

\smallskip

Finally, our paper is closely related to the literature on elimination contests. Similar to our two-stage model, \citet{Mendel_2021_sec_complete_info} study all-pay sequential elimination contests (SEC) under complete information, where players’ abilities are common knowledge. \citet{Fu_2012_Optimal_Elimination_Lottery} and \citet{fu_aermicro_2022_disclosure_sec} examine SECs in the form of Tullock contests, where players are homogeneous in their abilities and the probability of winning or advancing increases with effort. To the best of our knowledge, ours is the first paper to analyze two-stage SECs under incomplete information.\footnote{\citet{2021_Two-Stage} analyze a similar two-stage SEC to ours, but their analysis contains several errors. In particular, they use an incorrect posterior-belief structure and mistakenly assume that players’ posterior beliefs are i.i.d.} The seminal work of \citet{Moldovanu_2006_Contest_Architecture_Sub-elimination} studies a different but closely related two-stage \emph{sub-elimination} contest. In their setting, players are randomly assigned to subgroups in the first stage, and the winners of each subgroup advance to compete in the second stage. The belief structure in sub-elimination contests is relatively simple: a second-stage player perceives her opponents’ abilities as the largest order statistics from each subgroup, and these beliefs are i.i.d. In contrast, in our two-stage model, players’ posterior beliefs are non-i.i.d.\ and depend on their private abilities. Moreover, in two-stage SECs, posterior beliefs are intertwined with players’ first-stage efforts, making the analysis substantially more challenging and novel.

\medskip

\noindent\textbf{Organization.} Section~\ref{sec_model setup} introduces the model setup and defines the equilibrium notion. Section~\ref{sec:posterior_beliefs} characterizes the posterior beliefs. \Cref{sec:warm_up_wta} provides a warm-up analysis focusing on the winner-take-all prize structure with two admitted players. Section~\ref{sec:equilibriumeffort_optimalnumber} derives the equilibrium effort and characterizes the optimal number of admitted players. Section~\ref{sec_two-stage sec} extends the model to a two-stage setting and shows that no symmetric, strictly increasing PBE exists. We conclude in Section~\ref{sec_closing}. All proofs and auxiliary results appear in the Appendix.

\section{Model and Preliminaries}
\label{sec_model setup}

Consider an all-pay contest, where $n_1$ players register. The value of $n_1$ is assumed to be exogenous and commonly known. 
Each player has a private \textit{ability} $a_i$. At the registration stage, each player perceives other players' abilities as i.i.d. random variables drawn from a commonly known distribution function $F(\cdot)$ with continuous and strictly positive
density $f(\cdot)$, noted as the prior beliefs. Motivated by the examples mentioned in \Cref{sec:intro}, we assume that the designer knows the ranking of all players' abilities, potentially inferred from the information provided in their registration materials. By strategically selecting those with the top $n_2\leq n_1$ abilities to admit, the designer aims to 
maximize players' equilibrium efforts. %
The admitted players \emph{only} observe who are admitted and who are not. We assume $n_2\geq 2$ to avoid trivial results.

We assume that the prize structure is exogenously given.
One of the commonly used prize structures is ranking-based---the allocation of prizes only depends on the ranking of players' efforts (see, e.g., \citealt{Moldovanu_2001_AER_WTA_Optimial,Moldovanu_2006_Contest_Architecture_Sub-elimination,Liu_2022_optimal_reward_negativeprizes}). 
Specifically, let $L\in \mathbb{Z}_{>0}$ ($L\le n_2$) be the number of total prizes, and the value of the $\ell^{\textrm{th}}$ prize is denoted by $V_\ell$ with $V_1 \geq V_2 \geq \dots \geq V_L \geq 0$. For a given $n_2$, without loss of generality, we can let $L=n_2$ because some prizes can be $0$. 
Let $V=[V_{\ell}]_{\ell=1}^{n_2}$.
We will briefly discuss how to set up the optimal prize structure in \Cref{sec:equilibriumeffort_optimalnumber}, although this is not the focus of our paper.

\smallskip

Denote by $\mathcal{I}$ the set of $n_2$ admitted players. 
Let $A_i$ be the random variable (perceived by other players) of player $i$'s ability and $a_i$ be its realization. Without loss of generality, we assume that $a_i\in[0,1]$. 
A player $i \in \mathcal{I}$ exerts an effort $e_i$ which incurs a \textit{cost} $g(e_i)/a_i$ where $g: \mathbb R_{\ge0} \mapsto \mathbb R_{\ge0}$ is a strictly increasing, continuous and differentiable function with $g(0)=0$. Define $g(0)/0:=0$. Given the same effort, the player with higher ability incurs lower costs. Informally, the ability parameter $a_i$ can be explained as the rate at which player $i$ works.  
The player with the highest effort wins the first prize, $V_1$; similarly, the player with the second-highest effort wins the second prize, $V_2$, and so on until all the prizes are allocated.

Put $e=[e_i: i\in \mathcal{I}]\in \mathbb{R}_{\ge0}^{n_2}$ as all admitted players' efforts. Then, player $i$'s ex-post utility under effort profile $e$ is
\begin{align*}
    u_i(e_i,e_{-i})= \sum_{\ell =1}^L V_\ell\cdot \mathbbm{1}\{\text{$e_i$ is the $\ell^\textrm{th}$ highest among $e$}\} - \frac{g(e_i)}{a_i},
\end{align*}
where $\mathbbm{1}\{\cdot\}$ is the indicator function and ties are broken arbitrarily. %

\medskip

\noindent\textbf{Strategies.} 
Following the standard assumption in all-pay contest literature \citep{Moldovanu_2001_AER_WTA_Optimial,Moldovanu_2006_Contest_Architecture_Sub-elimination,Liu_2022_optimal_reward_negativeprizes}, we focus on symmetric and strictly increasing
strategies, where all players exert efforts according to their abilities, and the higher the ability, the more the effort.
Specifically, for any player $i\in \mathcal{I}$, we assume that her effort $e_i$ is a function of her ability $e_i = b(a_i)$ where $b(\cdot)$ is differentiable and strictly increasing in $a_i$.

\smallskip

\noindent\textbf{Belief System.} After being admitted into the contest, the player knows that her opponents are among the top $n_2$ abilities in the $n_1$ players. We assume that all players are Bayesian and will update their beliefs from the \textit{admission signal}. Let $s_i\in\{0,1\}$ be player $i$'s admission signal. Define $s_i=1$ if player $i$ is admitted into the contest and $0$ otherwise. Let $s = (s_1,s_2,\cdots, s_{n_1})$ be all $n_1$ players' signals, and $s$ is common knowledge for all players. 
Denote by $s_{-i}$ all other $n_1-1$ players' signals except that of player $i$ and $\mathcal{I}_{-i}$ as all admitted players except player $i$. Let $a_{-i} = \left[a_j : j\in\mathcal{I}_{-i} \right]$ be all other admitted players' abilities except that of player $i$, and let $A_{-i}$ be the corresponding random variables. Define $\beta_i(a_{-i}\mid s, a_i)$ as admitted player $i$'s posterior belief (joint probability density) about all other admitted players' abilities conditional on player $i$'s ability and the admission signal $s$. 
Note that, $\beta_i(a_{-i}\mid s,a_i)=\beta_i(a_{-i}\mid s,a_i,n_2)$ since the signal profile $s$ contains the information about the admitted number $n_2$. Thus, we drop the condition $n_2$ in the belief, and whether the player knows the admitted number in the posterior belief depends on whether it is conditioned on $s$.

\smallskip

\noindent\textbf{Equilibrium.} We adopt the notion of Bayesian Nash equilibrium (BNE) to define our solution concept. A BNE is a tuple of strategy $b(\cdot)$ and posterior beliefs $\left[\beta_i(a_{-i}\mid s,a_i): i \in \mathcal{I}\right]$ that satisfies the following conditions:

\begin{enumerate}
\item[(i)] For every player $i \in \mathcal{I}$, Bayes' rule is used to update her posterior belief $\beta_i(a_{-i}\mid s,a_i)$,
    \begin{align}
    \beta_i(a_{-i} \mid s ,a_i) = \beta_i(a_{-i} \mid s_{-i},s_{i},a_i) & =  \frac{\Pr\left(s_{-i}\mid a_{-i},s_{i}, a_i\right)\prod_{j\in\mathcal{I}_{-i}}f(a_j)}{\int_{a_{-i}}\Pr\left(s_{-i} \mid a_{-i},s_{i}, a_i\right)\prod_{j\in\mathcal{I}_{-i}}f(a_j)da_{-i}} \label{eq_posterior_bayes_rule}. 
\end{align}
The term $\Pr\left(s_{-i}\mid a_{-i},s_{i}, a_i\right)$ is the probability that the other $n_2-1$ players get admitted into the contest conditional on player $i$ being admitted, her ability $a_i$, and the other $n_2-1$ players' abilities $a_{-i}$. The formal proof for equation (\ref{eq_posterior_bayes_rule}) is provided in the proof of Proposition \ref{prop_post_belief} in the appendix.

\smallskip

\item[(ii)] For every player $i \in \mathcal{I}$,
\begin{align*}
    b(a_i) \in  {\arg\max}_{e_i}~ \sum_{\ell=1}^L V_\ell P_{i,\ell}(e_i\mid b) - \frac{g\left(e_i\right)}{a_i},
\end{align*}
where $P_{i,\ell}(e_i\mid b)$ is the probability of a player $i \in \mathcal{I}$ winning the prize by exerting effort $e_i$ given all other admitted players follow the strategy $b(\cdot)$ based on her posterior belief. Formally,
\begin{align*}
    P_{i,\ell}(e_i\mid b):= \Pr \left( \text{$e_i$ ranks the highest in $\left\{b(A_j): j\in\mathcal{I}_{-i}\right\}\cup\left\{e_i\right\}$}\right), \quad A_{-i} \sim \beta_i(a_{-i}\mid s,a_i).
\end{align*}
\end{enumerate}

\medskip

\noindent\textbf{Timeline.} 
The timeline of events is as follows: an exogenous number of \(n_1\) players register for the contest. 
The designer then sends a signal \(s\) to all players specifying which $n_2$ players are admitted. 
After receiving this signal, admitted players update their beliefs and exert effort to compete for prizes.

\medskip
\noindent\textbf{Designer's problem.} The designer aims to maximize the expected highest effort by choosing the admitted number of players $n_2$. The highest effort is a commonly used performance metric in the contest literature \citep{Moldovanu_2001_AER_WTA_Optimial, Moldovanu_2006_Contest_Architecture_Sub-elimination, Jason_Optimal_Crowdsourcing_Contest}.

\medskip

\noindent \textbf{Notation.} For any $x,y\in \mathbb{R}$, we use $x\wedge y:=\min\{x,y\}$ to denote the minimum of two values.
When we say a function is ``increasing/decreasing'', it means ``weakly increasing/decreasing''.

\section{Posterior Beliefs}
\label{sec:posterior_beliefs}
In this section, we characterize the admitted players' posterior beliefs $\beta_i(\cdot \mid s, a_i)$.

Define the incomplete beta function $B(x,p,q)$ with parameters $x\in[0,1], p\in\mathbb Z_{\ge0}, q\in\mathbb Z_{\ge0}$ as
\[
B(x,p,q) := \int_0^x t^{p-1} (1-t)^{q-1}dt,
\]
and define $B(x,p,q):=0, \forall x\in[0,1]$ when $pq=0$. Furthermore, for $x\in[0,1]$, define
\[
I(x,n_2) := x^{n_1-n_2}\bigl(1 - x \bigl)^{n_2-1}  + (n_2-1) B\bigl(x,n_1-n_2+1,n_2-1\bigl).
\]

In the following, whenever posterior beliefs are mentioned, these are the beliefs of admitted players as eliminated players do not affect the contest outcome.

\begin{proposition}[Posterior Beliefs]
\label{prop_post_belief}
If player $i$ with ability $a_i$ is admitted into the contest, her belief (joint density) about the other $n_2-1$ admitted players' abilities $A_{-i}$ is
\begin{equation}
\label{Eq_Posterior beliefs}
\beta_i(a_{-i} \mid s ,a_i) = \frac{F^{n_1-n_2}\left(a_i\wedge\normalfont{\min}_{j\in \mathcal{I}_{-i}} a_j\right)}{I(F(a_i),n_2)}\prod\limits_{j \in \mathcal{I}_{-i}}f(a_j),~ a_{-i}\in[0,1]^{n_2-1}.
\end{equation}
\end{proposition}
\begin{proof}[Proof Sketch]
We provide a sketch of the proof of \Cref{prop_post_belief} and highlight a common mistake.

To derive the posterior beliefs, it remains to determine $\Pr(s_{-i} \mid a_{-i}, s_i, a_i)$ in \eqref{eq_posterior_bayes_rule}. This term represents the probability that, given player~$i$'s ability $a_i$, the event that player~$i$ is admitted ($s_i = 1$), and the abilities $a_{-i}$ of the other $n_2 - 1$ players, those players are indeed admitted. Let $a_{(1)}$ be the minimum ability among these $n_2-1$ players, and let $\Bar{\mathcal{I}}$ be the set of eliminated players (all players except for player $i$ and these $n_2-1$ players). Then, we have
\begin{align}
    \Pr(s_{-i} \mid a_{-i}, s_i, a_i) = 
    \begin{cases}
    1, & \text{if } a_{(1)} > a_i, \\
    \Pr\left( \max_{k \in\Bar{\mathcal{I}}} A_k < a_{(1)} \mid a_{-i}, s_i, a_i \right), & \text{if } a_{(1)} < a_i.
    \end{cases}
    \label{eq:prob_admission_signal}
\end{align}
If $a_{(1)} > a_i$, then since player $i$ is admitted, these $n_2-1$ players must also be admitted, so the probability is 1. If $a_{(1)} < a_i$, to ensure these $n_2-1$ players are admitted, we must have all eliminated players' abilities smaller than $a_{(1)}$. This leads to the second part of \eqref{eq:prob_admission_signal}.

A \emph{naive} line of reasoning might suggest that $A_k$ for eliminated players still follows the prior distribution. This is incorrect because player $i$ knows that all eliminated players have abilities smaller than $a_i$. Thus, from player $i$'s perspective, $A_k$ is drawn from the distribution with CDF:
\[
    H(a_k) := \Pr\{ A_k < a_k \mid A_k < a_i \} = \frac{\Pr\{ A_k < a_k, A_k < a_i \}}{\Pr\{ A_k < a_i \}} = \frac{F(a_k)}{F(a_i)}, \quad a_k \in [0, a_i].
\]
The corresponding PDF is $h(a_k) = f(a_k)/F(a_i)$. Thus,
\begin{align}
    &\Pr\left( \max_{k \in \Bar{\mathcal{I}}} A_k < a_{(1)} \mid a_{-i}, s_i, a_i \right) = \prod_{k \in \Bar{\mathcal{I}} } \int_0^{a_{(1)}} \frac{f(a_k)}{F(a_i)} \, da_k  = \frac{F^{n_1 - n_2}(a_{(1)})}{F^{n_1-n_2}(a_i)}. \nonumber
\end{align}

In the formal proof, we first derive player $i$'s belief about all other players' abilities, including eliminated players. By integrating over the eliminated players' abilities, we then obtain player $i$'s belief about the abilities of the other admitted players.
\end{proof}

As a special case of the posterior belief with $n_1=n_2$, for every $a_i\in [0,1]$, we have
\begin{align*}
    I(F(a_i),n_1) = F^{n_1-n_1}(a_i)\cdot (1-F(a_i))^{n_1-1} + (n_1-1)\cdot \int_{0}^{F(a_i)} t^{n_1-n_1+1-1}(1-t)^{n_1-1-1}dt = 1,
\end{align*}
and thus $\beta_i(a_{-i}\mid s,a_i) = \prod_{j\in\mathcal{I}_{-i}}f(a_j)$. In other words, when all $n_1$ players are admitted, every player's posterior belief is the same as her prior belief.

\smallskip

Proposition~\ref{prop_post_belief} shows that the posterior beliefs about a given player's ability are \emph{not} identical across players; rather, they depend on each player's own private ability. In other words, \emph{players' posterior beliefs are themselves private}. Moreover, $\beta_i(a_{-i}\mid s, a_i)$ can be \emph{non-differentiable} (though still continuous) at the set $\{a_{-i} : \min_{j\in\mathcal{I}_{-i}} a_j = a_i\}$. Both the dependence on $a_i$ and the possible non-differentiability arise from the fact that whenever a player $j$'s ability is strictly higher than $a_i$, player~$i$ knows that, \emph{with probability one}, player~$j$ is admitted, conditional on player~$i$ herself being admitted.

\smallskip
A direct corollary from Proposition \ref{prop_post_belief} is the following result about players' marginal posterior beliefs, i.e., the posterior belief about one single admitted opponent's ability. 
\begin{corollary}[Marginal Posterior Beliefs]\label{Marginal Posterior Belief}
Player $i$'s ($i \in \mathcal{I}$) belief about player $j$'s ($j \in \mathcal{I}_{-i}$) ability is that, 
\begin{align*}
\beta_i\left(a_j\mid s,a_i \right) = \frac{f(a_j)}{I(F(a_i),n_2)}\Big(&(n_2-2)B\left(F(a_i\wedge a_j),n_1-n_2+1,n_2-2\right)\\
&+F^{n_1-n_2}(a_i\wedge a_j)(1-F(a_i\wedge a_j))^{n_2-2}\Big),~~\forall a_j\in [0,1].
\end{align*}
\end{corollary}
An admitted player's posterior belief about each opponent's ability is identical across opponents. What differs, however, are the posterior beliefs held by two different players about a third player's ability, due to their distinct private abilities. Although the posterior beliefs about different opponents' abilities are identical, they are \emph{not} independent. {\color{black}Consequently, the marginal posterior beliefs are not sufficient; the joint density in \Cref{prop_post_belief} is required to fully characterize the posterior beliefs.}

\begin{remark}
One player's marginal posterior beliefs about different players' abilities can be correlated. Formally, $\beta_i(a_{-i}\mid s,a_i)\neq \prod_{j\in \mathcal{I}_{-i}} \beta_i(a_j\mid s,a_i)$ in general.
\end{remark}

In words, suppose players $i,j,k$ are all admitted into the contest, player $i$'s belief about player $j$'s ability can depend on player $i$'s belief about player $k$'s. This is because when player $i$ gets admitted and knows that player $k$ also gets admitted, if player $j$'s ability is higher than player $k$'s, player $i$ knows that with probability one player $j$ also gets admitted. In other words, for player $i$, knowing player $k$'s ability has an impact on her belief about player $j$'s. It is worth noting that our one-round SEC is not an instance of standard Bayesian games with \emph{common} asymmetric and correlated beliefs as players' posterior beliefs rely on their individual \emph{private} abilities.

Although the same player's posterior beliefs about different opponents' abilities are correlated, different players' posterior beliefs about the same player's ability are independent, i.e., $\beta_i(a_j \mid s, a_i)$ and $\beta_k(a_j \mid s, a_k)$ are independent for any $i\neq j\neq k\in \mathcal{I}$. More generally, $\beta_i(a_{-i}\mid s,a_i)$ is independent of $\beta_j(a_{-j}\mid s,a_j)$, for any admitted players $i\neq j$.  This independence helps to simplify the analysis. 

\begin{example}
We now give an example with $n_2=2$ to illustrate the posterior beliefs derived in Proposition \ref{prop_post_belief}. When $n_2=2$, for any player $i\neq j \in \mathcal{I}$,
\begin{align*}
    \label{eq_poster_n_2=2_example}
    \beta_i\left(a_j \mid s,a_i\right) = 
\frac{f(a_j)}{\frac{F^{n_1-1}\left(a_i\right)}{n_1-1} + F^{n_1-2}(a_i)(1 - F(a_i))}F^{n_1-2}\left(a_i \wedge a_j\right),~~\forall a_j\in[0,1].
\end{align*}

Figure \ref{Fig_Comparison Beliefs} shows the comparison between a uniform prior belief and its corresponding posterior beliefs based on different values of $a_i$ with $n_1=5, n_2=2$. Since we assume a uniform prior, when $a_j>a_i$, the posterior density becomes a constant. Observe that the posterior density is not differentiable at $a_j=a_i$, i.e., there is a jump of the derivative of posterior density at $a_j=a_i$.
The amount of this jump can measure the impact of the admission signal on players' posterior beliefs. It
is not monotone in $a_i$ in general. One way to think about this is that the admission signal has little impact on the two extreme players with $a_i=0$ or $a_i=1$ since they know their abilities are lower or higher than any other players. 
On the other hand, our results (see Appendix \ref{sec_appendix:numerical}) suggest that the amount of this jump is non-increasing in $n_2$. This is also intuitive because  
the closer $n_2$ and $n_1$ are, the less the impact on players' posterior beliefs. One extreme case is $n_2=n_1$, where the posterior beliefs are the same as prior beliefs and there is no jump in this situation.

\smallskip

Another observation from the posterior CDF is that no matter what value $a_i$ takes,  once advancing to the contest, player $i$ always perceives her opponent stronger compared to her prior belief, i.e., $\: \Pr_{A_j\sim \beta_i(\cdot \mid s, a_i)}\left(A_j > a_j\right) > \Pr_{A_j \sim f(\cdot)}\left(A_j > a_j\right), \forall a_j \in [0,1]$. In fact, this observation is true in general as stated in the following proposition.

\begin{figure*}[htbp]
  \centering
  \subfloat[PDF: $a_i=0.3$]{  \resizebox{.32\textwidth}{!}{\begin{tikzpicture}
\begin{axis}[
    xlabel={$a_j$},
    ylabel={Density},
    legend style={at={(0.05,0.8)}, anchor=west}, %
    thick,
    ymin=0, %
    ymax=2.7,
    xmin=0,
    xmax=1,
    no markers,
    every axis plot post/.append style={line width=1.8pt}, %
    tick style={line width=1pt, black}, %
    tick label style={font=\large, scale=1.2}, %
    label style={font=\large, scale=1.2}
]
\addplot[red, dashed] table[x=a_j_values, y=Prior_PDF] {fig/belief/data/pdf_data_a_i=0.30_n_1=5.txt};
\addlegendentry{Prior PDF}

\addplot[blue] table[x=a_j_values, y=Posterior_PDF] {fig/belief/data/pdf_data_a_i=0.30_n_1=5.txt};
\addlegendentry{Posterior PDF}

\draw[dotted, line width=1.5pt] (axis cs:0.3,0) -- (axis cs:0.3,1.3); %
\end{axis}
\end{tikzpicture}}}
 \hfill
 \subfloat[PDF: $a_i=0.5$]{  \resizebox{.32\textwidth}{!}{\begin{tikzpicture}
\begin{axis}[
    xlabel={$a_j$},
    ylabel={Density},
    legend style={at={(0.05,0.8)}, anchor=west}, %
    thick,
    ymin=0, %
    ymax=2.7,
    xmin=0,
    xmax=1,
    no markers,
    every axis plot post/.append style={line width=1.8pt}, %
    tick style={line width=1pt, black}, %
    tick label style={font=\large, scale=1.2}, %
    label style={font=\large, scale=1.2}
]
\addplot[red, dashed, thick] table[x=a_j_values, y=Prior_PDF] {fig/belief/data/pdf_data_a_i=0.50_n_1=5.txt};
\addlegendentry{Prior PDF}

\addplot[blue, thick] table[x=a_j_values, y=Posterior_PDF] {fig/belief/data/pdf_data_a_i=0.50_n_1=5.txt};
\addlegendentry{Posterior PDF}

\draw[dotted, line width=1.5pt] (axis cs:0.5,0) -- (axis cs:0.5,1.6);
\end{axis}
\end{tikzpicture}}}
  \hfill
 \subfloat[PDF: $a_i=0.8$]{  \resizebox{.32\textwidth}{!}{\begin{tikzpicture}
\begin{axis}[
    xlabel={$a_j$},
    ylabel={Density},
    legend style={at={(0.05,0.8)}, anchor=west}, %
    thick,
    ymin=0, %
    ymax=2.7,
    xmin=0,
    xmax=1,
    no markers,
    every axis plot post/.append style={line width=1.8pt}, %
    tick style={line width=1pt, black}, %
    tick label style={font=\large, scale=1.2}, %
    label style={font=\large, scale=1.2}
]
\addplot[red, dashed, thick] table[x=a_j_values, y=Prior_PDF] {fig/belief/data/pdf_data_a_i=0.80_n_1=5.txt};
\addlegendentry{Prior PDF}

\addplot[blue, thick] table[x=a_j_values, y=Posterior_PDF] {fig/belief/data/pdf_data_a_i=0.80_n_1=5.txt};
\addlegendentry{Posterior PDF}

\draw[dotted, line width=1.5pt] (axis cs:0.8,0) -- (axis cs:0.8,2.6);
\end{axis}
\end{tikzpicture}}}
\vspace{0.9em}
 \subfloat[CDF: $a_i=0.3$]{  \resizebox{.32\textwidth}{!}{\begin{tikzpicture}
\begin{axis}[
    xlabel={$a_j$},
    ylabel={Probability},
    legend style={at={(0.05,0.8)}, anchor=west}, %
    thick,
    ymin=0, %
    ymax=1.1,
    xmin=0,
    xmax=1,
    no markers,
    every axis plot post/.append style={line width=1.8pt}, %
    tick style={line width=1pt, black}, %
    tick label style={font=\large, scale=1.2}, %
    label style={font=\large, scale=1.2}
]
\addplot[red, dashed] table[x=a_j_values, y=Prior_CDF] {fig/belief/data/cdf_data_a_i=0.30_n_1=5.txt};
\addlegendentry{Prior CDF}

\addplot[blue] table[x=a_j_values, y=Posterior_CDF] {fig/belief/data/cdf_data_a_i=0.30_n_1=5.txt};
\addlegendentry{Posterior CDF}

\draw[dotted, line width=1.5pt] (axis cs:0.3,0) -- (axis cs:0.3,1.3); %
\end{axis}
\end{tikzpicture}}}
  \hfill
   \subfloat[CDF: $a_i=0.5$]{  \resizebox{.32\textwidth}{!}{\begin{tikzpicture}
\begin{axis}[
    xlabel={$a_j$},
    ylabel={Probability},
    legend style={at={(0.05,0.8)}, anchor=west}, %
    thick,
    ymin=0, %
    ymax=1.1,
    xmin=0,
    xmax=1,
    no markers,
    every axis plot post/.append style={line width=1.8pt}, %
    tick style={line width=1pt, black}, %
    tick label style={font=\large, scale=1.2}, %
    label style={font=\large, scale=1.2}
]
\addplot[red, dashed] table[x=a_j_values, y=Prior_CDF] {fig/belief/data/cdf_data_a_i=0.50_n_1=5.txt};
\addlegendentry{Prior CDF}

\addplot[blue] table[x=a_j_values, y=Posterior_CDF] {fig/belief/data/cdf_data_a_i=0.50_n_1=5.txt};
\addlegendentry{Posterior CDF}

\draw[dotted, line width=1.5pt] (axis cs:0.5,0) -- (axis cs:0.5,1.3); %
\end{axis}
\end{tikzpicture}}}
  \hfill
   \subfloat[CDF: $a_i=0.8$]{  \resizebox{.32\textwidth}{!}{\begin{tikzpicture}
\begin{axis}[
    xlabel={$a_j$},
    ylabel={Probability},
    legend style={at={(0.05,0.8)}, anchor=west}, %
    thick,
    ymin=0, %
    ymax=1.1,
    xmin=0,
    xmax=1,
    no markers,
    every axis plot post/.append style={line width=1.8pt}, %
    tick style={line width=1pt, black}, %
    tick label style={font=\large, scale=1.2}, %
    label style={font=\large, scale=1.2}
]
\addplot[red, dashed] table[x=a_j_values, y=Prior_CDF] {fig/belief/data/cdf_data_a_i=0.80_n_1=5.txt};
\addlegendentry{Prior CDF}

\addplot[blue] table[x=a_j_values, y=Posterior_CDF] {fig/belief/data/cdf_data_a_i=0.80_n_1=5.txt};
\addlegendentry{Posterior CDF}

\draw[dotted, line width=1.5pt] (axis cs:0.8,0) -- (axis cs:0.8,1.3); %
\end{axis}
\end{tikzpicture}}}
  \caption{Comparison between prior and posterior beliefs ($n_1=5$, $n_2=2$ with uniform prior distribution)
  }
\label{Fig_Comparison Beliefs}
\end{figure*}

\end{example}

\begin{proposition}[Stochastic Dominance of Posterior Belief over Prior Belief]
\label{Proposition_margin_distribution_Stochastic Dominance}
    Any admitted player's posterior belief about the ability of an admitted opponent first-order stochastically dominates the prior belief. Formally, for any $n_2 < n_1$ and for all $a_i\in[0,1]$ and $i\neq j \in \mathcal{I}$,

\begin{equation*}
    \Pr_{A_{j}\sim \beta_i(\cdot\mid s,a_i)}\Big(A_{j} > a_j \Big) \geq  \Pr_{A_{j} \sim f(\cdot)}\Big(A_{j} > a_j\Big),~\forall a_j \in [0,1].
\end{equation*}
\end{proposition}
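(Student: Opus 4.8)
The plan is to reduce the first-order stochastic dominance claim to a monotone likelihood ratio (MLR) statement and then invoke the standard fact that MLR dominance implies (strict) FOSD. Concretely, I would work with the marginal posterior density from \Cref{Marginal Posterior Belief} and form the likelihood ratio $h(a_j) := \beta_i(a_j\mid s,a_i)/f(a_j)$, which is well defined since $f(a_j)>0$. Because the normalizing factor $I(F(a_i),n_1,n_2)>0$ does not depend on $a_j$, \Cref{Marginal Posterior Belief} gives
\[
I(F(a_i),n_1,n_2)\,h(a_j)=
\begin{cases}
(n_2-2)B\bigl(F(a_j),n_1-n_2+1,n_2-2\bigr)+F^{n_1-n_2}(a_j)\bigl(1-F(a_j)\bigr)^{n_2-2}, & a_j<a_i,\\[2mm]
(n_2-2)B\bigl(F(a_i),n_1-n_2+1,n_2-2\bigr)+\bigl(1-F(a_i)\bigr)^{n_2-2}, & a_j>a_i.
\end{cases}
\]
The goal is then to show $h$ is non-decreasing on $(0,1)$ and strictly increasing on a set of positive measure.

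The core computation is the monotonicity of $h$, which I expect to be the main obstacle and which I would handle piecewise. On $(a_i,1)$ the bracket is constant in $a_j$, so $h$ is constant there. On $(0,a_i)$, writing $x=F(a_j)$ and $\phi(x):=(n_2-2)B(x,n_1-n_2+1,n_2-2)+x^{n_1-n_2}(1-x)^{n_2-2}$, I would differentiate using $\tfrac{d}{dx}B(x,n_1-n_2+1,n_2-2)=x^{n_1-n_2}(1-x)^{n_2-3}$; the two $(n_2-2)$-terms cancel, leaving $\phi'(x)=(n_1-n_2)\,x^{n_1-n_2-1}(1-x)^{n_2-2}>0$ for $x\in(0,1)$ since $n_1-n_2\ge 1$ (the boundary case $n_2=2$ is consistent, with $\phi(x)=x^{n_1-n_2}$). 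By the chain rule $h$ is therefore strictly increasing on $(0,a_i)$. Finally I would check the jump at $a_j=a_i$: the right-hand value minus the left-hand limit of the bracket equals $\bigl(1-F(a_i)\bigr)^{n_2-2}\bigl(1-F^{n_1-n_2}(a_i)\bigr)>0$, so $h$ jumps \emph{upward}. Hence $h$ is non-decreasing on all of $(0,1)$ and non-constant.

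To pass from MLR to strict FOSD, I would use that both densities integrate to one, so $\int_0^1\bigl(h(a_j)-1\bigr)f(a_j)\,da_j=0$. Since $h-1$ is non-decreasing and non-constant, it changes sign exactly once, from negative to positive, at some $x^\ast\in(0,1)$. Setting $\psi(t):=\int_0^t\bigl(h-1\bigr)f$, one has $\psi(0)=\psi(1)=0$, with $\psi$ strictly decreasing on $(0,x^\ast)$ and strictly increasing on $(x^\ast,1)$, whence $\psi(t)<0$ for every $t\in(0,1)$. Because
\[
\Pr_{A_j\sim\beta_i(\cdot\mid s,a_i)}\bigl(A_j>t\bigr)-\bigl(1-F(t)\bigr)=\int_t^1\bigl(h-1\bigr)f=-\psi(t)>0,
\]
the strict dominance follows for all $t\in(0,1)$, which is exactly the claim. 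The only subtlety worth flagging is confirming the sign and single-crossing of $h-1$; this is automatic once $h$ is shown non-decreasing, non-constant, and integrating (against $f$) to one, so the monotonicity computation above is really the crux.
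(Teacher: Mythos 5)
Your proof is correct, but it takes a genuinely different route from the paper's. You establish a (strict) monotone likelihood ratio property: the ratio $h(a_j)=\beta_i(a_j\mid s,a_i)/f(a_j)$ is strictly increasing on $(0,a_i)$ (the derivative computation with the cancellation of the $(n_2-2)$-terms checks out, giving $(n_1-n_2)x^{n_1-n_2-1}(1-x)^{n_2-2}>0$), jumps upward by $\tfrac{1}{I}(1-F(a_i))^{n_2-2}(1-F^{n_1-n_2}(a_i))>0$ at $a_j=a_i$, and is constant above, and you then convert this into strict FOSD via the single-crossing of $h-1$ together with $\int_0^1(h-1)f=0$. The paper instead computes the posterior CDF $\Pr_{\beta_i}(A_j\le a_j)$ explicitly in the two regimes $a_j<a_i$ and $a_j>a_i$, using Fubini and an integration-by-parts identity for the incomplete beta function, and bounds the result above by $F(a_j)$ after recognizing the normalizer $I(F(a_i),n_1,n_2)$ in the resulting expression. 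Your approach buys a strictly stronger conclusion (MLRP, not just FOSD), a much shorter core computation, and a cleaner handling of the strict inequality, which the paper's chain of weak inequalities does not track as carefully; the paper's approach buys explicit closed forms for the posterior CDF that are reused in the computation of $\mathbb{E}_{A_j\sim\beta_i}[A_j]$ in Observation \ref{prop-not monotone}. The only point to tighten in your write-up is the assertion that $\psi$ is \emph{strictly} increasing on $(x^\ast,1)$: this requires $h>1$ there, which does follow from your own structure (strict increase below $a_i$, upward jump, constant value $c$ above, and $c>1$ forced by $\int_0^1(h-1)f=0$ together with non-constancy), but it deserves a sentence rather than being left implicit.
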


Stochastic dominance implies that if a player is admitted into the contest, she will perceive the average ability of her opponents to be higher. Formally, $\mathbb{E}_{A_j\sim \beta_i(\cdot\mid s,a_i)}\left[A_j\right] \geq \mathbb{E}_{A_j\sim f(\cdot)}\left[A_j\right], \forall a_i \in [0,1]$. %

\section{Warm Up: Winner-Take-All Prize and $n_2=2$}
\label{sec:warm_up_wta}

In this section, for illustration, we consider a simple case with a winner-take-all prize structure, i.e., $V_1=1$, only two players are admitted, and the cost function is linear: $g(x)=x$. We index the two admitted players as $i$ and $j$.

When player $j$ follows a strictly increasing strategy $b(\cdot)$ and player $i$ exerts effort $b(\tilde{a}_i)$ for some $\tilde{a}_i\in[0,1]$, the utility of player $i$ (excluding the factor $1/a_i$) is
\begin{align*}
    U_i(\tilde{a}_i,a_i\mid n_2, b) 
    := ~& a_i \cdot \Pr_{A_{j}\sim \beta_i(\cdot \mid s,a_i)}\{b(A_{j})<b(\tilde{a}_i)\} - b(\tilde{a}_i) \\
     = ~&a_i \cdot \Pr_{A_{j}\sim \beta_i(\cdot \mid s,a_i)}\{A_{j}<\tilde{a}_i\} - b(\tilde{a}_i).
\end{align*}
By \Cref{prop_post_belief} regarding the posterior belief $\beta_i(\cdot \mid s,a_i)$, we have
\begin{align}
\label{eq:utility_wta}
    U_i(\tilde{a}_i,a_i\mid n_2, b) =
    \begin{cases}
      \frac{a_i}{J(F(a_i),n_2)} \cdot F^{n_1-1}(\tilde{a}_i) - b(\tilde{a}_i), & \quad \text{if } \tilde{a}_i\in[0,a_i],\\
      \frac{a_i}{J(F(a_i),n_2)}\cdot F^{n_1-2}(a_i) \bigl[
      (n_1-1) F(\tilde{a}_i) - (n_1-2) F(a_i)
      \bigr] - b(\tilde{a}_i),    & \quad \text{if } \tilde{a}_i\in[a_i,1],
    \end{cases}
\end{align}
where $J(x,n_2):= \binom{n_1-1}{n_2-1} \cdot I(F(a_i),n_2)$ for $x\in[0,1]$. We have $J(x,n_1)=1$ for all $x\in[0,1]$.
Notice that the only term depending on $n_2$ is the denominator $J(F(a_i),n_2)$.

To make $b(\cdot)$ an equilibrium, $U_i(\tilde{a}_i,a_i\mid n_2, b)$ must be maximized at $\tilde{a}_i=a_i$. In other words, when player $j$ follows $b(\cdot)$, it should be optimal for player $i$ to do so as well. The equilibrium utility $U_i(a_i,a_i\mid n_2, b)$ is
\begin{align}
\label{eq:equilibrium_utility_n2_b}
 U_i(a_i,a_i\mid n_2, b)   
 = 
\underbrace{\frac{a_i}{J(F(a_i),n_2)}}_{:=v(a_i\mid n_2)} \cdot F^{n_1-1}(a_i) - b(a_i).
\end{align}
For a regular all-pay contest without elimination (i.e., $n_2=n_1$), we have
\begin{align}
\label{eq:equilibrium_utility_n1_b}
    U_i(a_i,a_i\mid n_1,a_i) = a_i \cdot F^{n_1-1}(a_i) - b(a_i).
\end{align}

The only difference between \eqref{eq:equilibrium_utility_n2_b} and \eqref{eq:equilibrium_utility_n1_b} is that under elimination, player $i$ pretends to have an \textit{inflated ability} $v(a_i\mid n_2)$ at equilibrium.
We call it ``inflated'' ability since $v(a_i\mid n_2)\geq a_i$ as shown later in \Cref{sec:equilibriumeffort_optimalnumber}.
Notice that, $v(a_i\mid n_2)$ may \textit{not} be increasing in $a_i\in[0,1]$.

To make $b(\cdot)$ an equilibrium, it must satisfy the first-order condition:
\begin{align*}
  \frac{\partial U_i(\tilde{a}_i,a_i\mid n_2, b)}{\partial \tilde{a}_i}\Big|_{\tilde{a}_i=a_i} = 0 
  \quad \text{for all } a_i\in[0,1].
\end{align*}
This gives a unique solution: for all $a_i\in[0,1]$,
\begin{align}
\label{eq_b(a_i)_wta}
 b(a_i) = \int_{0}^{a_i} v(x\mid n_2)\, dF^{n_1-1}(x).
\end{align}

However, the first-order condition may \emph{not} yield a global maximum.
To make \eqref{eq_b(a_i)_wta} an equilibrium, we need 
\begin{align}
\label{eq:equilibriumcondition_wta}
 a_i \in \argmax_{\tilde{a}_i\in [0,1]}~ U_i(\tilde{a}_i,a_i\mid n_2,b),
\end{align}
where the utility $U_i(\tilde{a}_i,a_i\mid n_2,b)$ is given by \eqref{eq:utility_wta} with strategy $b(\cdot)$ in \eqref{eq_b(a_i)_wta}. 
We claim that \eqref{eq_b(a_i)_wta} is indeed an equilibrium if the inflated ability $v(a_i\mid n_2)$ is increasing in $a_i\in [0,1]$.
To see this, consider
the derivative of the utility with respect to $\tilde{a}_i$:
\begin{align*}
&\textrm{if $\tilde{a}_i\in [0,a_i]$,} \quad  
\frac{\partial U_i(\tilde{a}_i,a_i\mid n_2,b)}{\partial \tilde{a}_i}
=\left(F^{n_1-1}(\tilde{a}_i)\right)^\prime 
   \cdot \left[v(a_i\mid n_2)  - v(\tilde{a}_i\mid n_2)\right]; \\
& \textrm{if $\tilde{a}_i\in [a_i,1]$,} \quad   
   \frac{\partial U_i(\tilde{a}_i,a_i\mid n_2,b)}{\partial \tilde{a}_i}=(n_1-1)\cdot f(\tilde{a}_i) \cdot \left[
  v(a_i\mid n_2)\cdot F^{n_1-2}(a_i) 
   - v(\tilde{a}_i\mid n_2)\cdot F^{n_1-2}(\tilde{a}_i)  
  \right] .
\end{align*}

If the inflated ability $v(a_i\mid n_2)$ is increasing in $a_i\in [0,1]$, it is easy to show that the derivative $\partial U_i(\tilde{a}_i,a_i\mid n_2,b)/\partial \tilde{a}_i$ is non-negative in the range $\tilde{a}_i\in [0,a_i]$ and non-positive in the range $\tilde{a}_i\in [a_i,1]$.
That is, the utility $ U_i(\tilde{a}_i,a_i\mid n_2,b)$ is increasing in $\tilde{a}_i\in [0,a_i]$ and decreasing in $\tilde{a}_i\in [a_i,1]$. 
Thus, condition \eqref{eq:equilibriumcondition_wta} holds.
We will show later that this observation holds generally, and that the increasing property of $v(\cdot \mid n_2)$ is also \textit{necessary} for the existence of a symmetric and strictly increasing equilibrium.

\section{Equilibrium Efforts and Optimal Admitted Number}
\label{sec:equilibriumeffort_optimalnumber}
We now turn to a general setting. 
Denote by $F_{(\ell,n)}(\cdot)$ the distribution function of the $\ell^{\textrm{th}}$ largest order statistic among $n$ i.i.d. random variables with distribution $F(\cdot)$. We have the following identities, for $1\leq \ell \leq n$,
\begin{align*}
F_{(\ell,n)}(x) =& \sum_{j=n-\ell+1}^n \binom{n}{j}F^j(x)\bigl(1 - F(x) \bigl)^{n-j},\\
dF_{(\ell,n)}(x) =& \frac{n!}{(n-\ell)!(\ell-1)!}F^{n-\ell}(x)\bigl(1 - F(x) \bigl)^{\ell-1}dF(x).
\end{align*}
We further define $F_{(0,n)}(x)=0, F_{(n,n-1)}(x)=0,dF_{(0,n)}(x)=0dF(x),\forall x \in[0,1]$.

Put $g^{-1}(\cdot)$ to be the inverse function of the cost function $g(\cdot)$.
For $x\in[0,1]$ and $n_2\in[2,n_1]$, define
\[
J(x,n_2) := \binom{n_1-1}{n_2-1} \cdot I(x,n_2).
\]
Note that $J(x,n_1)=1$ for any $x\in[0,1]$ since $I(x,n_1)=1$ as mentioned before.

Define the \textit{inflated ability} as follows: for any $a_i\in[0,1]$ and for any $n_2\in[2,n_1]$,
\begin{align*}
    v(a_i\mid n_2) = \frac{a_i}{J(F(a_i),n_2)}.
\end{align*}
Our next result characterizes the properties of the function $J(F(a_i),n_2)$ and shows that the inflated ability $v(a_i\mid n_2)\geq a_i$.
\begin{lemma}[Inflation Effect]
\label{lemma_J_increasing_n_2}
For any $n_1\geq 2$ and any $x\in[0,1]$,
\begin{enumerate}
    \item[(i)] $J(x,n_2)\leq 1$ for any $n_2\in[2,n_1]$;
    \item[(ii)] $J(x,n_2)$ is increasing in $n_2 \in [2,n_1]$.
\end{enumerate}
\end{lemma}

Part~(i) establishes that, for any admitted player \(i\), the inflated ability \(v(a_i \mid n_2)\) strictly exceeds her true ability \(a_i\), since knowing she has outperformed \(n_1 - n_2\) players leads her to update her beliefs upward about her relative standing. Part~(ii) further shows that this inflation effect weakens as the number of admitted players increases: admitting more players provides a weaker informational signal that ``my ability is relatively high,'' whereas admitting fewer players makes each admitted player more certain of belonging to a very exclusive top tier, thereby reinforcing a stronger upward revision of her perceived ability.

Although the inflated ability $v(a_i\mid n_2)$ is decreasing in $n_2$ by \Cref{lemma_J_increasing_n_2} (ii), it may not be monotone in $a_i$.
\begin{proposition}\label{prop:monotonicity of v}
For the power-law prior distribution $F(x)=x^\theta$ with $\theta>0$, given $n_2\in[2,n_1]$,
\begin{enumerate}
    \item[(i)] there exists $\hat{a}\in[0,1)$ such that the inflated ability $v(a_i\mid n_2)$ decreases in $a_i\in [0,\hat{a}]$ and increases in $a_i\in[\hat{a},1]$;
    \item[(ii)] the following statements are equivalent,
    \begin{enumerate}
        \item $v(a_i\mid n_2)$ is increasing with $a_i\in[0,1]$, i.e., $\hat{a}=0$,
        \item $v(0\mid n_2)$ is finite,
        \item $(n_1-n_2)\cdot \theta \leq 1$.
    \end{enumerate}
\end{enumerate}
\end{proposition}

Notice that, for general prior distribution $F(\cdot)$, the finiteness of $v(0\mid n_2)$ is only a necessary condition, but it may not be sufficient to guarantee that $v(a_i\mid n_2)$ is increasing in $a_i\in[0,1]$.\footnote{For example, consider $n_1=3$, $n_2=2$, and a prior distribution $F(x)=0.2x + 0.8x^3$ for $x\in[0,1]$. Under these parameters, $v(0\mid n_2)$ is finite; however, $v(a_i\mid n_2)$ first increases, then decreases, and then increases again in $a_i$.}

Let $b(\cdot \mid n_2)$ denote the equilibrium strategy when the admitted number of players is $n_2$. Our next result characterizes the unique symmetric and strictly increasing strategy (if it exists) and shows that the increasing property of the inflated ability plays a critical role.

\begin{theorem}[Equilibrium Strategy]
\label{thm_equilibrium strategy_general_prize}
    For any $n_2\in[2,n_1]$, 
\begin{enumerate}
    \item[(i)] there exist symmetric and strictly increasing equilibrium strategies \emph{if and only if}
$v(a_i\mid n_2)$ is non-decreasing in $a_i$ for all $a_i\in[0,1]$;
    \item[(ii)] if %
    such equilibrium strategies exist, it is unique and admits the following form:
    \begin{equation}
b(a_i\mid n_2) = 
   g^{-1}\left(   \int_{0}^{a_i}v(x\mid n_2)\sum\limits_{\ell=1}^{n_2-1}(V_\ell-V_{\ell+1})dF_{(\ell,n_1-1)}(x)\right),
\label{eq_b(a_i)_general_prize}
\end{equation}  
for any admitted player with ability $a_i\in[0,1]$.
\end{enumerate}
\end{theorem}

\Cref{thm_equilibrium strategy_general_prize}~(i) shows that no symmetric and strictly increasing (SSI) equilibrium strategy exists when there exists some intervals in $[0,1]$ where  the inflated ability $v(a_i\mid n_2)$ is strictly decreasing in $a_i$. This finding is in sharp contrast with the all-pay auction (contest) literature, where an SSI equilibrium strategy always exists if players are ex ante identical with i.i.d.\ prior distributions; see, e.g., \cite{Moldovanu_2001_AER_WTA_Optimial,Moldovanu_2006_Contest_Architecture_Sub-elimination}. The key difference here is that, in our setting, players have different and private beliefs, so the resulting first-order condition need \textit{not} yield a maximizer.
Further numerical results to illustrate this is provided in \Cref{app_sec:non_increasing}.
We show that it yields a maximizer if and only if the inflated ability $v(a_i\mid n_2)$ is increasing in $a_i$ for all $a_i\in[0,1]$.

We now illustrate why $v(a_i\mid n_2)$ has to be increasing for \textit{all} $a_i\in[0,1]$ to ensure the existence of an SSI equilibrium strategy. When all admitted players follow the strategy in \eqref{eq_b(a_i)_general_prize}, it can be shown that the admitted player $i$'s utility is
\begin{align*}
    \int_0^{a_i} \sum_{\ell=1}^{n_2-1} (V_\ell - V_{\ell+1}) \, F_{(\ell,n_1-1)}(x) \, dv(x\mid n_2).
\end{align*}
When everyone follows a strictly increasing strategy, it can be shown that a higher-ability player receives a higher utility. This requirement underpins why the existence of an SSI equilibrium strategy demands that $v^\prime(x\mid n_2)\geq 0$ for all $x\in[0,1]$, i.e., $v(x\mid n_2)$ is increasing for \textit{all} $x\in[0,1]$. A formal proof is provided in the Appendix.

The potential non-existence of an SSI equilibrium in all-pay contests (auctions) is not uncommon when players' private types are correlated. For instance, \cite{krishna_1997_all_pay_affiliation} establish a sufficient condition for the existence of an SSI equilibrium: namely, when private types are not too strongly \textit{affiliated}, where \textit{affiliation} is defined in the sense of \cite{milgrom_1982_auctiontheory_competitive_bidding}.\footnote{Private types are \textit{affiliated} if their joint density is log-supermodular.} However, to the best of our knowledge, a (simple) necessary condition for the existence of an SSI equilibrium strategy is relatively rare in the all-pay contests (auction) literature.

\Cref{thm_equilibrium strategy_general_prize}~(ii) states that, when an SSI equilibrium strategy exists, it is unique and is given by \eqref{eq_b(a_i)_general_prize}. The uniqueness stems from the fact that the first-order condition admits a unique solution. When $n_2=n_1$, i.e., a regular one-round contest without elimination, $v(x\mid n_2)=x$ and equation \eqref{eq_b(a_i)_general_prize} degenerates to the setting studied by \cite{Moldovanu_2001_AER_WTA_Optimial}. %
It can thus be seen that all effects of the elimination mechanisms to the equilibrium strategy, including the impact of the number of admitted players and the posterior beliefs, are encapsulated in the inflated ability $v(x\mid n_2)$.

Combining \eqref{eq_b(a_i)_general_prize} and \Cref{lemma_J_increasing_n_2}, we have the following corollary.

\begin{corollary}
\label{coro:comparison_n_2_n_2_1}
For any $n_2 \in [3,n_1]$, if $V_{n_2}=0$, $b(a_i\mid n_2-1)\geq b(a_i\mid n_2)$ for all $a_i\in[0,1]$.  
\end{corollary}

The assumption $V_{n_2}=0$ is to make the comparison between $b(a_i\mid n_2-1)$ and $b(a_i\mid n_2)$ fair in the sense that the latter case does not have one additional prize $V_{n_2}$.
The decreasing property follows from the fact that the inflated ability decreases with $n_2$ by \Cref{lemma_J_increasing_n_2}. Note that this decreasing property holds for each realized ability $a_i$, making it a \textit{pointwise} result. To our knowledge, such a pointwise monotonicity result is uncommon in the contest literature.
For example, even in a regular one-round contest without elimination, there is no pointwise monotonicity of players' equilibrium efforts with respect to the total number.\footnote{Consider a situation with a winner-take-all prize structure and a uniform prior distribution.
Then, for a regular one-round contest, the equilibrium effort is $b(a_i) = (1-\frac{1}{n})\cdot a_i^n$, where $n$ is the total number.
For $a_i=0.8$, her effort under $n=3$ is strictly greater than the one under $n=2$ and $n=4$.
}

\subsection{Optimal Admitted Number}

To derive the optimal $n_2$, in what follows, we focus on parameters ($n_1$ and $F$) under which there exists an SSI equilibrium strategy for all $n_2\in[2,n_1]$. 
For example, under the power-law prior distribution, an equilibrium exists if $(n_1-2)\theta \leq 1$ by \Cref{prop:monotonicity of v}.

\begin{theorem}%
\label{thm:opt_number}
 Suppose that
\begin{enumerate}
    \item[(i)]  the inflated ability $v(a_i\mid n_2)$ is non-decreasing in $a_i\in[0,1]$ for any $n_2\in[2,n_1]$;
     \item[(ii)] the designer offers at most two strictly positive prizes. %
\end{enumerate} 
Then, the optimal number of admitted players is two, and the optimal prize structure is winner-take-all in terms of the expected highest effort. 
\end{theorem}

In fact, in the proof of Theorem \ref{thm:opt_number}, we show an even stronger result that among the class of prize structures with at most two strictly positive prizes: (i) given any admitted number, the optimal prize structure is winner-take-all; and (ii) given any prize structure, the optimal admitted number is two. 

When there are more than three positive prizes, one cannot directly compare $b(a_i\mid n_2)$ and $b(a_i\mid n'_2)$ using \eqref{eq_b(a_i)_general_prize} for arbitrary $n_2$ and $n'_2$. In this case, to find the optimal prize structure in terms of the expected highest effort, one needs to explicitly compute $\int_0^1 b(a_i\mid n_2)\,dF^{n_1}(a_i)$.
Because of the complicated form of $v(a_i\mid n_2)$, the expected highest effort is challenging to calculate, even under a linear cost function. Finding the jointly optimal admitted number and prize structure thus calls for future research.

\section{Extension: A Two-Stage Model}
\label{sec_two-stage sec}

When the designer does not have access to the exact ranking of players' abilities, he could run some preliminary tests in the first stage of the contest to learn such information and then decide who to admit.
In this section, we consider a natural extension to our model: a two-stage sequential elimination contest (SEC). In the first stage, all $n_1$ players are able to attend a preliminary contest, and depending on their first-stage efforts, the designer admits the top $n_2$ players to the second stage to compete for prizes. Similar to before, the admitted players only know they and their opponents are the players with top $n_2$ first-stage efforts, but do not know the exact ranking. 
We allow cost functions to be different at different stages. 

To ease the presentation, we delegate the formal definition of the two-stage SEC to Appendix \ref{sec_appendix_sec}. Instead, we provide an informal definition of our solution concept of the two-stage SEC. Our equilibrium notion is based on perfect Bayesian equilibrium (PBE).
\begin{definition}[Informal]
A PBE of a two-stage SEC is a tuple of players' posterior beliefs and players' efforts in two stages that satisfies:
\begin{enumerate}
    \item[(i)] \emph{Bayesian Updating}: Any  player's posterior (second-stage) belief should be updated by the signal that her first-stage effort is among the top $n_2$ highest first-stage efforts.
    \item[(ii)] \emph{Sequential Rationality}: Any player exerts effort that maximizes the expected utility from any stage onward given her belief at the current stage.
\end{enumerate}

\end{definition}
The formal definition can be found in Definition \ref{def_PBE in SEC} in the Appendix. We call a PBE symmetric and strictly increasing if at each stage, all players follow the same strictly increasing %
function which maps their abilities to efforts. The first-stage effort function can be different from the second-stage one. 

\smallskip

Surprisingly, we establish a strong negative result. We show that there does \emph{not} exist a symmetric and strict increasing PBE under any ranking-based prize structure, type distribution, and cost function. We provide a proof sketch below and the formal proof can be found in Appendix \ref{subsec:non_existence}. %

\begin{proposition}[Non-Existence]
\label{prop_non-existenc_sec}
For any $n_2\in[2,n_1)$, there does not exist a symmetric and strictly increasing PBE in a two-stage SEC.
\end{proposition}
\begin{proof}[Proof Sketch]
First, if the inflated ability \(v_i(a_i \mid n_2)\) is not increasing in \(a_i\) for all \(a_i \in [0,1]\), then by \Cref{thm_equilibrium strategy_general_prize}, no symmetric and strictly increasing equilibrium strategy exists in the second stage, which completes the proof for this case.

When the inflated ability $v_i(a_i\mid n_2)$ is indeed always increasing in $a_i$ for all $a_i\in[0,1]$, our proof involves three steps. (i) We first assume that there exists such a symmetric and strictly increasing equilibrium strategy in each stage, at the equilibrium path, the ranking of first-stage efforts has to be exactly the ranking of players' abilities. In this case, the posterior beliefs will be the same as those characterized in Proposition \ref{prop_post_belief}. (ii) Based on the assumption of symmetric and strictly increasing PBE, the posterior beliefs only depend on players' abilities and are independent of first-stage efforts. Thus, backward induction can be used to derive the equilibrium strategies at both stages. We show that the equilibrium efforts at each stage are unique. (iii) Finally, we establish a contradiction by proving that, given that other players follow the above unique equilibrium strategies, there is \emph{always} an incentive for the player to deviate from her first-stage effort and change her posterior belief, which violates the definition of PBE.
\end{proof}

The difficulty to construct a symmetric and strictly increasing PBE lies in the fact that if a player deviates from her first-stage strategy, her second-stage beliefs will change accordingly. This dependency makes it complicated to derive an equilibrium in a two-stage SEC.

\smallskip

We now contrast our analyses and results of the two-stage SEC with other existing results. \citet{Moldovanu_2006_Contest_Architecture_Sub-elimination} studied another two-stage elimination contest format, wherein in the first stage, all players are randomly divided into $n_2$ groups, and the winner in each sub-group, i.e., the player with the highest first-stage effort within each sub-group, is promoted to the second and final stage. We refer to this format as a two-stage \textit{sub-elimination} contest. They show that there exists a unique symmetric and strictly increasing PBE strategy. When $n_1$ is large enough, regardless of the prior distribution, it is optimal to set $n_2=2$ in terms of the expected highest efforts under the linear cost function. They further show that this outcome is strictly better than that under a regular one-round contest admitting all players.

The analysis of sub-elimination contests is much easier than that of SEC since players' posterior beliefs are \textit{independent} of first-stage efforts. This is because the players in the second stage are in different sub-groups in the first stage, thus conditional on being promoted to the second stage, the first-stage efforts 
have no impact on the posterior beliefs about her opponents' abilities. Specifically, in the sub-elimination contest, players in the second stage will perceive their opponents' abilities as the largest order statistic among $n_1/n_2$ i.i.d. random variables drawn from the prior distribution. However, in a two-stage SEC, players' posterior beliefs are coupled with their first-stage efforts. %

\section{Closing Remarks}
\label{sec_closing}

We characterize the players' private posterior beliefs and show that all effects of this elimination mechanism are captured by their inflated abilities. Furthermore, the existence of a symmetric and strictly increasing equilibrium strategy is equivalent to the inflated ability being increasing in each player's true ability. When the designer offers at most two positive prizes, we show that admitting two players and using a winner-take-all prize structure is optimal in terms of the expected highest efforts.

In the extended two-stage model, although we show that no symmetric and strictly increasing PBE exists, we conjecture that an asymmetric PBE may exist. Specifically, players' first-stage equilibrium efforts remain symmetric, but their second-stage equilibrium efforts may be asymmetric due to non-identical posterior beliefs. This conjecture calls for future analysis.

\bibliographystyle{ACM-Reference-Format}
\bibliography{references}

\newpage
\appendix

\begin{center}
\textbf{\LARGE Appendix}
\end{center}
\medskip

In \Cref{app_sec:non_increasing}, we provide numerical results when the inflated ability is not increasing with the true ability.
In \Cref{sec_appendix:numerical}, we provide additional results about posterior beliefs.
In Appendix \ref{sec_appendix_pf}, we provide detailed proofs of the main results.
In Appendix \ref{sec_appendix_auxiliaryreults}, we give additional auxiliary lemmas and their proofs.
Finally, in Appendix \ref{sec_appendix_sec}, we present the formal model and analysis of a two-stage sequential elimination contest.

\section{When the Inflated Ability is Not Increasing}
\label{app_sec:non_increasing}

\Cref{Fig:utility_envelopefails} shows how players' utilities behave when the inflated ability $v(a_i\mid n_2)$ is not always increasing in $a_i\in[0,1]$. Specifically, it plots how $U_i(\tilde{a}_i,a_i \mid n_2,b)$, which is the admitted player $i$'s utility when all other players adopt strategy $b(\cdot)$ given by \eqref{eq_b(a_i)_general_prize} while she exerts effort $b(\tilde{a}_i)$, varies with $\tilde{a}_i$. If $b(\cdot)$ is an equilibrium, then $U_i(\tilde{a}_i,a_i \mid n_2,b)$ should be maximized at $\tilde{a}_i=a_i$. However, we see that when the inflated ability is not increasing, this does not hold.
\begin{figure*}[htbp]
  \centering
  \subfloat[$a_i=0.3$]{  \resizebox{.32\textwidth}{!}{\begin{tikzpicture}
\begin{axis}[
    xlabel={$\tilde{a}_i$},
    ylabel={$U_i(\tilde{a}_i,a_i \mid n_2,b)$},
    thick,
    no markers,
    every axis plot post/.append style={line width=1.8pt}, %
    tick style={line width=1pt, black}, %
    tick label style={font=\large, scale=1.2}, %
    label style={font=\large, scale=1.2}
]

\addplot[blue] table[x index=0, y index=1] {fig/envelopefails/data/utility_foc_a_i=0.30.txt};

\draw[dotted, line width=1.5pt] 
  (axis cs:0.3,0) 
  -- (axis cs:0.3,-0.9);

\end{axis}
\end{tikzpicture}}}
  \hfill
  \subfloat[$a_i=0.5$]{  \resizebox{.32\textwidth}{!}{\begin{tikzpicture}
\begin{axis}[
    xlabel={$\tilde{a}_i$},
    ylabel={$U_i(\tilde{a}_i,a_i \mid n_2,b)$},
    thick,
    no markers,
    every axis plot post/.append style={line width=1.8pt}, %
    tick style={line width=1pt, black}, %
    tick label style={font=\large, scale=1.2}, %
    label style={font=\large, scale=1.2}
]

\addplot[blue] table[x index=0, y index=1] {fig/envelopefails/data/utility_foc_a_i=0.50.txt};

\draw[dotted, line width=1.5pt] 
  (axis cs:0.5,-0.01) 
  -- (axis cs:0.5,-0.65);

\end{axis}
\end{tikzpicture}}}
  \hfill
  \subfloat[$a_i=0.7$]{  \resizebox{.32\textwidth}{!}{\begin{tikzpicture}
\begin{axis}[
    xlabel={$\tilde{a}_i$},
    ylabel={$U_i(\tilde{a}_i,a_i \mid n_2,b)$},
    thick,
    no markers,
    every axis plot post/.append style={line width=1.8pt}, %
    tick style={line width=1pt, black}, %
    tick label style={font=\large, scale=1.2}, %
    label style={font=\large, scale=1.2}
]

\addplot[blue] table[x index=0, y index=1] {fig/envelopefails/data/utility_foc_a_i=0.70.txt};

\draw[dotted, line width=1.5pt] 
  (axis cs:0.7,-0.05) 
  -- (axis cs:0.7,-0.4);

\end{axis}
\end{tikzpicture}}}
  \vspace{0.9em}
  \subfloat[$a_i=0.8$]{  \resizebox{.32\textwidth}{!}{\begin{tikzpicture}
\begin{axis}[
    xlabel={$\tilde{a}_i$},
    ylabel={$U_i(\tilde{a}_i,a_i \mid n_2,b)$},
    thick,
    no markers,
    every axis plot post/.append style={line width=1.8pt}, 
    tick style={line width=1pt, black}, 
    tick label style={font=\large, scale=1.2}, 
    label style={font=\large, scale=1.2},
    ymin=-0.3,
    ytick={-0.3, 0}
]

\addplot[blue] table[x index=0, y index=1] 
  {fig/envelopefails/data/utility_foc_a_i=0.80.txt};

\draw[dotted, line width=1.5pt] 
  (axis cs:0.8,-0.08) 
  -- (axis cs:0.8,-0.3);

\end{axis}
\end{tikzpicture}}}
  \hfill
  \subfloat[$a_i=0.9$]{  \resizebox{.32\textwidth}{!}{\begin{tikzpicture}
\begin{axis}[
    xlabel={$\tilde{a}_i$},
    ylabel={$U_i(\tilde{a}_i,a_i \mid n_2,b)$},
    thick,
    no markers,
    every axis plot post/.append style={line width=1.8pt}, 
    tick style={line width=1pt, black}, 
    tick label style={font=\large, scale=1.2}, 
    label style={font=\large, scale=1.2},
    ymin=-0.2,
    ytick={-0.2,  0}
]

\addplot[blue] table[x index=0, y index=1] {fig/envelopefails/data/utility_foc_a_i=0.90.txt};

\draw[dotted, line width=1.5pt] 
  (axis cs:0.9,-0.11) 
  -- (axis cs:0.9,-0.2);

\end{axis}
\end{tikzpicture}}}
  \hfill
  \subfloat[$a_i=0.98$]{  \resizebox{.32\textwidth}{!}{\begin{tikzpicture}
\begin{axis}[
    xlabel={$\tilde{a}_i$},
    ylabel={$U_i(\tilde{a}_i,a_i \mid n_2,b)$},
    thick,
    no markers,
    every axis plot post/.append style={line width=1.8pt}, %
    tick style={line width=1pt, black}, %
    tick label style={font=\large, scale=1.2}, %
    label style={font=\large, scale=1.2},
    ytick={-0.1,0}  %
]

\addplot[blue] table[x index=0, y index=1] {fig/envelopefails/data/utility_foc_a_i=0.98.txt};

\draw[dotted, line width=1.5pt] 
  (axis cs:0.98,-0.13) 
  -- (axis cs:0.98,-0.085);

\end{axis}
\end{tikzpicture}}}
  \caption{Player $i$'s utility $U_i(\tilde{a}_i,a_i \mid n_2,b)$ ($n_1=10$, $n_2=3$) under a uniform prior and a linear cost function with a winner-take-all prize structure.
  The vertical axis $U_i(\tilde{a}_i,a_i \mid n_2,b)$ represents the utility of an admitted player $i$ with ability $a_i$ if all other admitted players use the strategy $b(\cdot)$ defined in \eqref{eq_b(a_i)_general_prize} while she exerts effort $b(\tilde{a}_i)$.
}
\label{Fig:utility_envelopefails}
\end{figure*}

\section{Non-differentiable Posterior Beliefs}
\label{sec_appendix:numerical}
Recall that admitted player $i$'s marginal posterior belief (density) is non-differentiable at $a_j=a_i$. Define the \textit{jump} in the derivative of posterior PDFs as follows.
\begin{align*}
    H(a_i,n_2\mid F,n_1):=\lim_{a_j\to a_i^{-}}\frac{\partial\beta_i\left(a_j\mid s,a_i \right)}{\partial a_j}-\lim_{a_j\to a_i^{+}}\frac{\partial\beta_i\left(a_j\mid s,a_i \right)}{\partial a_j}.
\end{align*}

\begin{figure*}[htbp]
  \centering
  \resizebox{.68\textwidth}{!}{\begin{tikzpicture}
\begin{axis}[
    xlabel={$a_i$},
    ylabel={$n_2$},
    zlabel={$H(a_i, n_2 \mid F, n_1)$},
    view={60}{30},
    grid=both,
    mesh/cols=100,          %
    mesh/rows=19,           %
    mesh/ordering=colwise,  %
    unbounded coords=jump,
    ytick={2,4,6,8,10,12,14,16,18,20}, %
    tick label style={font=\footnotesize	}, %
    label style={font=\footnotesize	},      %
]

\addplot3[
    mesh,
    draw=gray,             %
    line width=0.2pt,     %
]
table[
    x=ai,
    y=n2,
    z=H,
    col sep=tab,
    header=true
]
{fig/belief/data/H_values_data_uniformprior.txt};

\end{axis}
\end{tikzpicture}}
  \caption{The jump $H(a_i,n_2\mid F,n_1)$ ($n_1=20$ and uniform prior distribution)}
\label{fig_jump}
\end{figure*}

Figure \ref{fig_jump} shows how the jump varies with $a_i$ and $n_2$ given uniform prior distribution and $n_1=20$. 
The amount of this jump can measure the impact of the admission signal on players' posterior beliefs. As we can see in Figure \ref{fig_jump}, in general, the amount of jump is not monotone in $a_i$. However, numerical results suggest that it is non-increasing in $n_2$.
Our next proposition proves this monotonicity for uniform distribution and $a_i\geq 0.5$.

\begin{proposition}
For uniform prior distribution, given $a_i\in[0.5,1]$, $H(a_i,n_2\mid F,n_1)$ is non-increasing with $n_2\in[2,n_1]$ for any $n_1$. 
\end{proposition}

\begin{proof}
For \( a_j < a_i \), we have
\begin{align*}
\frac{\partial \beta_i(a_j \mid s, a_i)}{\partial a_j} &= \frac{1}{I(a_i, n_2)} \left[ (n_2 - 2) \frac{\partial}{\partial a_j} B\left( a_j, n_1 - n_2 + 1,\ n_2 - 2 \right) + \frac{\partial}{\partial a_j} \left( a_j^{n_1 - n_2} (1 - a_j)^{n_2 - 2} \right) \right]\\
&= \frac{1}{I(a_i, n_2)} \left[ (n_2 - 2) a_j^{n_1 - n_2} (1 - a_j)^{n_2 - 3} \right. + (n_1 - n_2) a_j^{n_1 - n_2 - 1} (1 - a_j)^{n_2 - 2} \\
&\quad \left. - (n_2 - 2) a_j^{n_1 - n_2} (1 - a_j)^{n_2 - 3} \right] \\
&= \frac{(n_1 - n_2) a_j^{n_1 - n_2 - 1} (1 - a_j)^{n_2 - 2}}{I(a_i, n_2)},
\end{align*}
hence
\[
\lim_{a_j \to a_i^-} \frac{\partial \beta_i(a_j \mid s, a_i)}{\partial a_j} = \frac{(n_1 - n_2) a_i^{n_1 - n_2 - 1} (1 - a_i)^{n_2 - 2}}{I(a_i, n_2)}.
\]
Under the uniform distribution, we obtain 
\[
\lim_{a_j \to a_i^+} \frac{\partial \beta_i(a_j \mid s, a_i)}{\partial a_j} = 0.
\]
Therefore, the jump is
\[
H(a_i, n_2\mid F,n_1) = \frac{(n_1 - n_2) a_i^{n_1 - n_2 - 1} (1 - a_i)^{n_2 - 2}}{I(a_i, n_2)}.
\]
Since $I(a_i,n_2)$ is increasing with $n_2$, and observe that 
\begin{align*}
    \frac{H(a_i, n_2 + 1\mid F,n_1)}{H(a_i, n_2\mid F,n_1)} = \frac{n_1 - n_2 -1}{n_1 -n_2}\frac{I(a_i, n_2 )}{I(a_i, n_2 + 1)}\frac{1 - a_i}{a_i},
\end{align*}
we can conclude that  $H(a_i, n_2 + 1\mid F,n_1)/H(a_i, n_2\mid F,n_1)\leq 1$ when $a_i \in [0.5, 1]$, which completes the proof.
\end{proof}

\section{Proofs}
\label{sec_appendix_pf}
\begin{proof}[\textbf{Proof of Proposition \ref{prop_post_belief}}]
Denote by $\mathcal{I}^0$ the set of $n_1$ initial players. For any admitted player $i \in \mathcal{I}$, define $\mathcal{I}^0_{-i} = \{j: j\in \mathcal{I}^0, j\neq i\}$, and $\mathcal{I}_{-i} = \{j: j\in \mathcal{I}, j\neq i\}$. With a little notation abuse, let $a^0_{-i}=\left[a_j : j\in\mathcal{I}^0_{-i} \right]$,
$a_{(1)} = \min_{j \in \mathcal{I}_{-i}} a_j$, and $a^*=\max_{j \in \mathcal{I}^0\backslash\mathcal{I}} a_j$. 

Given signal $s$, by the Bayes' theorem and the tower property of conditional expectation, we have 
 \begin{align*}
    \beta_i\left(a^0_{-i} \mid s,n_2, a_i\right)& = \beta_i\left(a^0_{-i} \mid s,a_i\right) = \beta_i\left(a^0_{-i} \mid s_{-i},s_{i},a_i\right)\\
    & = \frac{\beta_i\left(a^0_{-i} \mid s_{i}, a_i\right)\Pr\left(s_{-i}\mid a^0_{-i},s_{i}, a_i\right)}{\int_{[0,1]^{n_1-1}}\beta_i\left(a^0_{-i} \mid s_{i}, a_i\right)\Pr\left(s_{-i}\mid a^0_{-i},s_{i}, a_i\right)da^0_{-i}} \nonumber,
\end{align*}
where $\beta_i\left(a^0_{-i} \mid s,a_i\right)$ is the belief of the other $n_1-1$ players' abilities on the condition that player $i$ is admitted and her ability is $a_i$, and $\Pr\left(s_{-i}\mid a^0_{-i},s_{i}, a_i\right)$ is the probability that players in $\mathcal{I}_{-i}$ get admitted into the contest on the condition that player $i$ is admitted, her ability is $a_i$, and all the other players' abilities $a^0_{-i}$.
Observe that the formula $\beta_i\left(a^0_{-i} \mid s_i,a_i\right)$ does not condition on $s_{-i}$, then the admitted player $i$ can not obtain additional information of $n_2$ and the other players' abilities, hence  $\beta_i\left(a^0_{-i} \mid s_i,a_i\right)$ is exactly the prior $\prod_{j\in\mathcal I^0_{-i}}f(a_j)$.

Since player $i$ is admitted, if the minimum value of other players' abilities $a_{(1)}$ is greater than player $i$'s ability, or $a_{(1)}$ is less than player $i$'s ability but greater than the maximum value of unadmitted players' abilities $a^*$, $\Pr\left(s_{-i}\mid a^0_{-i},s_{i}, a_i\right)=1$; otherwise, $\Pr\left(s_{-i}\mid a^0_{-i},s_{i}, a_i\right)=0$. Formally, we have $\Pr\left(s_{-i}\mid a^0_{-i},s_{i}, a_i\right)=\mathbb{I}\left\{ a^*<a_i\leq a_{(1)}\right\}+\mathbb{I}\left\{a^*<a_{(1)}< a_i\right\}$, where $\mathbb{I}\{\cdot\}$ is the indicator function. The denominator in the posterior beliefs then becomes
\begin{align}
&\quad \int_{[0,1]^{n_1-1}} \beta_i(a^0_{-i}\mid s_i, a_i) \Pr\left(s_{-i}\mid a_{-i}^{0},s_i, a_i\right)   da_{-i}^0\nonumber \\
&= \underbrace{\int_0^1\cdots\int_0^1}_{n_1-1} \Pr\left(s_{-i}\mid a^0_{-i},s_i, a_i\right) \prod_{j\in \mathcal{I}^0_{-i}}\left(f(a_j)da_j \right)\nonumber \\
& = (n_2-1)\bigg( \int_0^{a_i} \biggl(\underbrace{\int_{a_k}^1\cdots \int_{a_k}^1}_{n_2-2}\biggl(\underbrace{\int_{0}^{a_k}\cdots \int_{0}^{a_k}}_{n_1-n_2}\prod_{j \in \mathcal{I}^0\backslash \mathcal{I}}\bigl(f(a_j)da_j \bigl)\biggl)  \prod_{j \in \mathcal{I}_{-i}\backslash\{k\}}\bigl(f(a_j)da_j \bigl) \biggl)f(a_k)\underbrace{da_k}_{k \in \mathcal{I}_{-i}}  \biggl) \nonumber \\
& \quad + (n_2-1)\biggl( \int^1_{a_i} \biggl(\underbrace{\int_{a_k}^1\cdots \int_{a_k}^1}_{n_2-2} \biggl(\underbrace{\int_{0}^{a_i}\cdots \int_{0}^{a_i}}_{n_1-n_2}\prod_{j \in \mathcal{I}^0\backslash \mathcal{I}}\bigl(f(a_j)da_j \bigl)\biggl)  \prod_{j \in \mathcal{I}_{-i}\backslash\{k\}}\bigl(f(a_j)da_j \bigl) \biggl)f(a_k)\underbrace{da_k}_{k \in \mathcal{I}_{-i}}   \biggl) \nonumber \\
& = (n_2-1)\int_0^{a_i} (1-F(a_k) \bigl)^{n_2-2} F^{n_1-n_2}(a_k)dF(a_k) + (n_2-1)F^{n_1-n_2}(a_i)\int^1_{a_i} \bigl(1-F(a_k) )^{n_2-2} dF(a_k) \nonumber \\
& = (n_2-1)B\left(F(a_i),n_1-n_2+1,n_2-1)\right) + F^{n_1-n_2}(a_i)\bigl(1 - F(a_i) \bigl)^{n_2-1}:=I(F(a_i),n_2). \nonumber 
\end{align}
Hence   
\begin{align*}
    \beta_i\left(a^0_{-i} \mid s,a_i\right)= \frac{\mathbb{I}\left\{ a^*<a_i\leq a_{(1)}\right\}+\mathbb{I}\left\{a^*<a_{(1)}< a_i\right\}}{I(F(a_i),n_2)} \prod\limits_{j \in \mathcal{I}^0_{-i}}f(a_j)\nonumber,
\end{align*}
which implies that
\begin{align*}
\beta_i(a_{-i} \mid s ,a_i) =& \underbrace{\int_{0}^{1}\cdots \int_{0}^{1}}_{n_1-n_2}\beta_i\left(a^0_{-i} \mid s,a_i\right)\prod_{j \in \mathcal{I}^0\backslash \mathcal{I}}da_j \\
=&\begin{cases}
\frac{F^{n_1-n_2}\left(a_i\right)}{I(F(a_i),n_2)}\prod\limits_{j \in \mathcal{I}_{-i}}f(a_j), & a_i \leq \normalfont{\min}_{j\in \mathcal{I}_{-i}} a_j , \\[4mm]
\frac{F^{n_1-n_2}\left(\normalfont{\min}_{j\in \mathcal{I}_{-i}} a_j\right)}{I(F(a_i),n_2)}\prod\limits_{j \in \mathcal{I}_{-i}}f(a_j), ~ & a_i > \normalfont{\min}_{j\in \mathcal{I}_{-i}} a_j.
\end{cases}
\end{align*}
This completes the proof.
\end{proof}

\medskip

\begin{proof}[\textbf{Proof of Corollary \ref{Marginal Posterior Belief}}]
 For any admitted players $i\neq j \in \mathcal{I}$, define $\mathcal{I}_{-i,-j} = \mathcal{I}\backslash\{i,j\}$. With a little notation abuse, denote ${\tilde a_{(1)}} = \min_{k \in \mathcal{I}_{-i,-j}} a_k $ as the minimum ability of admitted players except for player $i$ and player $j$.

\smallskip

When $a_i<a_j<1$, by Proposition \ref{prop_post_belief}, the (joint) posterior beliefs is given as follows.
\[
\beta_{i}(a_{-i}\mid s,a_i) = 
\begin{cases}
\frac{F^{n_1-n_2}(a_i)}{I(F(a_i),n_2)}\prod_{k\in \mathcal{I}_{-i}}f(a_k), \qquad & a_i < {\tilde a_{(1)}}<1,  \vspace{0.5cm}\\
\frac{F^{n_1-n_2}\left({\tilde a_{(1)}}\right)}{I(F(a_i),n_2)}\prod_{k\in \mathcal{I}_{-i}}f(a_k), \qquad & 0<{\tilde a_{(1)}}<a_i . 
\end{cases}
\]
Hence the marginal posterior belief becomes
\begin{align}
\beta_{i}(a_{j}\mid s,a_i)\nonumber =& \frac{(n_2-2)}{I(F(a_i),n_2)}\bigg( \int_0^{a_i} \biggl(\underbrace{\int_{a_\ell}^1\cdots \int_{a_\ell}^1}_{n_2-3} f(a_j)f(a_\ell) F^{n_1-n_2}(a_\ell) \prod_{k\neq \ell, k \in \mathcal{I}_{-i,-j}}\bigl(f(a_k)da_k \bigl) \biggl)\underbrace{da_\ell}_{\ell \in \mathcal{I}_{-i,-k}}  \bigg) \nonumber \\ 
&+\frac{(n_2-2)F^{n_1-n_2}(a_i)}{I(F(a_i),n_2)}\bigg( \int^1_{a_i} \biggl(\underbrace{\int_{a_\ell}^1\cdots \int_{a_\ell}^1}_{n_2-3} f(a_j)f(a_\ell)  \prod_{k\neq \ell, k \in \mathcal{I}_{-i,-j}}\bigl(f(a_k)da_k \bigl) \biggl)\underbrace{da_\ell}_{\ell \in \mathcal{I}_{-i,-k}}   \bigg) \nonumber \\
=& \frac{(n_2-2)f(a_j)}{I(F(a_i),n_2)} \bigg(\int_0^{a_i} \bigl(1-F(a_\ell) \bigl)^{n_2-3} F^{n_1-n_2}(a_\ell)dF(a_\ell)+F^{n_1-n_2}(a_i)\int^1_{a_i} \bigl(1-F(a_\ell) \bigl)^{n_2-3} dF(a_\ell)\bigg) \nonumber \\
=& \frac{f(a_j)}{I(F(a_i),n_2)}\left((n_2-2)B\left(F(a_i),n_1-n_2+1,n_2-2)\right) + F^{n_1-n_2}(a_i)\bigl(1 - F(a_i) \bigl)^{n_2-2} \right).\nonumber
\end{align} 

Similarly, the (joint) posterior beliefs when $0<a_j<a_i$ is 
\[
\beta_{i}(a_{-i}\mid s,a_i) = 
\begin{cases}
\frac{F^{n_1-n_2}(a_j)}{I(F(a_i),n_2)}\prod_{k\in \mathcal{I}_{-i}}f(a_k), \qquad & a_j < {\tilde a_{(1)}}<1,  \vspace{0.5cm}\\
\frac{F^{n_1-n_2}\left({\tilde a_{(1)}}\right)}{I(F(a_i),n_2)}\prod_{k\in \mathcal{I}_{-i}}f(a_k), \qquad & 0<{\tilde a_{(1)}}<a_j . 
\end{cases}
\] 
Thus, the marginal posterior belief in this case becomes
\begin{align}
\beta_{i}(a_{j}\mid s,a_i) 
& = \frac{(n_2-2)}{I(F(a_i),n_2)}\biggl( \int_0^{a_j} \biggl(\underbrace{\int_{a_\ell}^1\cdots \int_{a_\ell}^1}_{n_2-3} f(a_j)f(a_\ell) F^{n_1-n_2}(a_\ell) \prod_{k\neq \ell, k \in \mathcal{I}_{-i,-j}}\bigl(f(a_k)da_k \bigl) \biggl)\underbrace{da_\ell}_{\ell \in \mathcal{I}_{-i,-k}}  \biggl)  \nonumber \\ 
& \quad +\frac{(n_2-2)}{I(F(a_i),n_2)}\biggl( \int^1_{a_j} \biggl(\underbrace{\int_{a_\ell}^1\cdots \int_{a_\ell}^1}_{n_2-3} f(a_j)f(a_\ell)F^{n_1-n_2}(a_j) \prod_{k\neq \ell, k \in \mathcal{I}_{-i,-j}}\bigl(f(a_k)da_k \bigl) \biggl)\underbrace{da_\ell}_{\ell \in \mathcal{I}_{-i,-k}}   \biggl) \nonumber \\
& = \frac{(n_2-2)}{I(F(a_i),n_2)}f(a_j)\int_0^{a_j} \bigl(1-F(a_\ell) \bigl)^{n_2-3} F^{n_1-n_2}(a_\ell)f(a_\ell)da_\ell \nonumber \\ 
&\quad + \frac{(n_2-2)}{I(F(a_i),n_2)}f(a_j)F^{n_1-n_2}(a_j)\int^1_{a_j} \bigl(1-F(a_\ell) \bigl)^{n_2-3} f(a_\ell)da_\ell \nonumber \\
& = \frac{f(a_j)}{I(F(a_i),n_2)}\left((n_2-2)B(F(a_j),n_1-n_2+1,n_2-2)+F^{n_1-n_2}(a_j)(1-F(a_j))^{n_2-2}\right).\nonumber 
\end{align} 
Putting everything together, the marginal posterior belief is given by
\begin{align}\label{marginal belief}
&\quad \beta_i\left(a_j\mid s,a_i \right) \nonumber\\
&= \begin{cases}
\frac{f(a_j)}{I(F(a_i),n_2)}\left((n_2-2)B(F(a_j),n_1-n_2+1,n_2-2)+F^{n_1-n_2}(a_j)\bigl(1-F(a_j)\bigl)^{n_2-2}\right), & a_j < a_i,\\
\frac{f(a_j)}{I(F(a_i),n_2)}\left((n_2-2)B\left(F(a_i),n_1-n_2+1,n_2-2\right) +F^{n_1-n_2}(a_i) \bigl(1 - F(a_i) \bigl)^{n_2-2} \right), & a_j > a_i.\\
\end{cases}
\end{align}
This completes the proof.
\end{proof}

\medskip
\begin{proof}[\textbf{Proof of Proposition \ref{Proposition_margin_distribution_Stochastic Dominance}}]
It is equivalent to show: if $2 \leq n_2 < n_1$, $\forall a_j \in [0,1]$,
\[
    \Pr_{A_{j}\sim \beta_i(\cdot\mid s,a_i)}\left(A_{j} \leq a_j \right) \leq \Pr_{A_{j} \sim f(\cdot)}\left(A_{j} \leq a_j\right),
\]
where the right-hand-side is
\[
\Pr_{A_{j} \sim f(\cdot)}\left(A_{j} \leq a_j\right)=F(a_j).
\]
At the beginning of the proof, we show that the following two equations hold.

\medskip

\noindent(i). For all $\alpha\in \mathbb{N}_{>0}\setminus\{1,2\},  \beta \in \mathbb{N}_{>0}\setminus\{1\}$,
\begin{align}
    \beta B\left(x,\alpha,\beta \right)&=\beta \int_0^x t^{\alpha-1}(1-t)^{\beta-1}dt= (\alpha-1)\int_0^x t^{\alpha-2}(1-t)^{\beta}dt-x^{\alpha-1}(1-x)^{\beta} .\label{Appendix_Eq_beta}
\end{align}
\noindent(ii). Using (\ref{Appendix_Eq_beta}), we re-calculate $I(F(a_i),n_2)$ as follows.
\begin{align}
    I(F(a_i),n_2)&=(n_2-1)B\left(F(a_i),n_1-n_2+1,n_2-1 \right)+  F^{n_1-n_2}(a_i)(1-F(a_i))^{n_2-1} \nonumber \\
    &=(n_1-n_2)\int_0^{F(a_i)}x^{n_1-n_2-1}(1-x)^{n_2-1}dx \nonumber. 
\end{align}

\noindent\textit{\underline{Case 1: $a_j< a_i$}} 

\vspace{1em}
When $a_j < a_i$, by equation (\ref{marginal belief}) we have
\begin{align}
    &\Pr_{A_{j}\sim \beta_i(\cdot\mid s,a_i)}\left(A_{j} \leq a_j \right) \nonumber \\  
    &=\frac{1}{I(F(a_i),n_2)}\left(\int_0^{F(a_j)}(n_2-2)B\left(s,n_1-n_2+1,n_2-2 \right)ds + \int_0^{F(a_j)}s^{n_1-n_2}(1-s)^{n_2-2}ds \right).\nonumber
\end{align}
Since 
\begin{align}
    &\quad \int_0^{F(a_j)}(n_2-2)B\left(s,n_1-n_2+1,n_2-2 \right)ds\nonumber\\
    &=(n_2-2)\int_0^{F(a_j)}ds\int_0^sx^{n_1-n_2}(1-x)^{n_2-3}dx
    \nonumber\\
    &=(n_2-2)\int_0^{F(a_j)}dx\int_x^{F(a_j)}x^{n_1-n_2}(1-x)^{n_2-3}ds \nonumber\\
    &=(n_2-2)\int_0^{F(a_j)}\left(F(a_j)-x\right)x^{n_1-n_2}(1-x)^{n_2-3}dx \nonumber\\
    &=(n_1-n_2)F(a_j)\int_0^{F(a_j)}x^{n_1-n_2-1}(1-x)^{n_2-2}dx-(n_1-n_2+1)\int_0^{F(a_j)}x^{n_1-n_2}(1-x)^{n_2-2}dx \nonumber,
\end{align}
where the last equality holds by the identity in equation (\ref{Appendix_Eq_beta}), then we have
\begin{align}
    &\Pr_{A_{j}\sim \beta_i(\cdot\mid s,a_i)}\left(A_{j} \leq a_j \right) \nonumber \\  &=\frac{n_1-n_2}{I(F(a_i),n_2)}\left(F(a_j)\int_0^{F(a_j)}x^{n_1-n_2-1}(1-x)^{n_2-2}dx-\int_0^{F(a_j)}x^{n_1-n_2}(1-x)^{n_2-2}dx \right) \label{equa_10}
     \\
    &\leq \frac{F(a_j)(n_1-n_2)}{I(F(a_i),n_2)}\int_0^{F(a_j)}\left(x^{n_1-n_2-1}(1-x)^{n_2-2}-x^{n_1-n_2}(1-x)^{n_2-2}\right)dx  \nonumber \\
    &=\frac{F(a_j)}{I(F(a_i),n_2)}\left((n_1-n_2)\int_0^{F(a_j)}x^{n_1-n_2-1}(1-x)^{n_2-1}dx \right)\nonumber \\
    &=F(a_j) \nonumber \\[2mm]
    &=\Pr_{A_{j} \sim f(\cdot)}\left(A_{j} \leq a_j\right). \nonumber
\end{align}

\bigskip

\noindent\textit{\underline{Case 2: $a_j > a_i$}} 

\vspace{1em}

When $a_j>a_i$, we have
\begin{align}
    &\Pr_{A_{j}\sim \beta_i(\cdot\mid s,a_i)}\left(A_{j} \leq a_j \right) \nonumber \\  &=\frac{1}{I(F(a_i),n_2)}\int_0^{a_i}(n_2-2)B\left(F(t),n_1-n_2+1,n_2-2 \right)dF(t)\nonumber \\
    &\quad+ \frac{1}{I(F(a_i),n_2)} \int_0^{a_i}F^{n_1-n-2}(t)(1-F(t))^{n_2-2}dF(t) \nonumber \\
    &\quad+\frac{1}{I(F(a_i),n_2)}\left(\int_{a_i}^{a_j}(n_2-2)B\left(F(a_i),n_1-n_2+1,n_2-2 \right)+ F^{n_1-n_2}(a_i) (1-F(a_i))^{n_2-2}dF(t) \right) .\nonumber
\end{align}
In addition, by equation (\ref{equa_10}),
\begin{align}
    & \quad \frac{1}{I(F(a_i),n_2)}\left(\int_0^{a_i}\left((n_2-2)B\left(F(t),n_1-n_2+1,n_2-2 \right)+F^{n_1-n-2}(t)(1-F(t))^{n_2-3}\right)dF(t) \right) \nonumber \\
    &=\frac{n_1-n_2}{I(F(a_i),n_2)}\left(F(a_i)\int_0^{F(a_i)}x^{n_1-n_2-1}(1-x)^{n_2-2}dx-\int_0^{F(a_i)}x^{n_1-n_2}(1-x)^{n_2-2}dx \right). \nonumber 
\end{align}
Moreover, 
\begin{align}
    &\quad \frac{1}{I(a_i,n_2)}\left(\int_{a_i}^{a_j}(n_2-2)B\left(F(a_i),n_1-n_2+1,n_2-2 \right)+F^{n_1-n_2}(a_i) (1-F(a_i))^{n_2-2}dF(t) \right) \nonumber \\
    &=\frac{1}{I(a_i,n_2)}(F(a_j)-F(a_i))\left((n_2-2)\int_{0}^{F(a_i)}x^{n_1-n_2}(1-x)^{n_2-3}dx+F^{n_1-n_2}(a_i)(1-F(a_i))^{n_2-2}\right) \nonumber \\
    &=\frac{F(a_j)-F(a_i)}{I(a_i,n_2)}\left((n_1-n_2)\int_{0}^{F(a_i)}x^{n_1-n_2-1}(1-x)^{n_2-2}dx\right) ,\nonumber
\end{align}
where the last equality holds by the identity in equation \eqref{Appendix_Eq_beta}.

\vspace{1em}

Hence, we have
\begin{align}
    &\Pr_{A_{j}\sim \beta_i(\cdot\mid s,a_i)}\left(A_{j} \leq a_j \right) \nonumber \\  &=\frac{n_1-n_2}{I(F(a_i),n_2)}\left(F(a_j)\int_0^{F(a_i)}x^{n_1-n_2-1}(1-x)^{n_2-2}dx-\int_0^{F(a_i)}x^{n_1-n_2}(1-x)^{n_2-2}dx \right) \nonumber \\
    &\leq \frac{n_1-n_2}{I(F(a_i),n_2)}F(a_j)\int_0^{F(a_i)}(x^{n_1-n_2-1}(1-x)^{n_2-2}-x^{n_1-n_2}(1-x)^{n_2-2})dx  \nonumber \\
    &= \frac{F(a_j)}{I(F(a_i),n_2)}\underbrace{\left((n_1-n_2)\int_0^{F(a_i)}x^{n_1-n_2-1}(1-x)^{n_2-1}dx \right)}_{I(F(a_i),n_2)}\nonumber \\
    &=F(a_j) \nonumber \\[3mm]
    &=\Pr_{A_{j} \sim f(\cdot)}\left(A_{j} \leq a_j\right). \nonumber
\end{align}
This completes the proof.
\end{proof}

\medskip

\begin{proof}[\textbf{Proof of Lemma \ref{lemma_J_increasing_n_2}}]
(i). Recall the definition of $J(x,n_2)$,
\begin{align}
     J(x,n_2) & := \binom{n_1-1}{n_2-1} \cdot I(x,n_2) \nonumber \\
& = \binom{n_1-1}{n_2-1} \biggl ((n_2-1)\int_0^{x}t^{n_1-n_2}(1-t)^{n_2-2}dt + x^{n_1-n_2}(1-x)^{n_2-1} \biggl) \nonumber\\
& = \binom{n_1-1}{n_2-1} (n_1-n_2)\int_0^{x}t^{n_1-n_2-1}(1-t)^{n_2-1}dt\nonumber.
\end{align}
Thus, the monotonicity of $J(x,n_2)$ in $x$ is the same as the monotonicity of $I(x,n_2)$ in $x$.
Observe that
\begin{align}
    \frac{\partial I(x,n_2)}{\partial x} &  = (n_1-n_2)x^{n_1-n_2-1}\bigl(1 - x \bigl)^{n_2-1} \geq 0,
\end{align}
then for any $n_2\in [2,n_1]$, $I(x,n_2)$ is increasing in $x$ for $x\in [0,1]$. Hence
\begin{align*}
    J(x,n_2) \leq J(1,n_2) = \frac{(n_1-1)!}{(n_1-n_2)!(n_2-1)!} (n_2-1)\int_0^1t^{n_1-n_2}(1-t)^{n_2-2}dt  =  1 \nonumber.
\end{align*}
(ii). The key idea is to evaluate $J(x,n_2)-J(x,n_2-1)$. Observe that for $3\leq n_2\leq n_1$, 
\begin{align}
     J(x,n_2-1) &= \binom{n_1-1}{n_2-2} \biggl ((n_2-2)\int_0^{x}t^{n_1-n_2+1}(1-t)^{n_2-3}dt + x^{n_1-n_2+1}(1-x)^{n_2-2} \biggl) \nonumber \\
     &=\binom{n_1-1}{n_2-2} (n_1-n_2+1)\int_0^{x}t^{n_1-n_2}(1-t)^{n_2-2}dt \nonumber \\
     &=\binom{n_1-1}{n_2-1} (n_2-1)\int_0^{x}t^{n_1-n_2}(1-t)^{n_2-2}dt \nonumber 
\end{align}
then we have $\forall x \in [0,1]$,
\begin{align}
     J(x,n_2-1)- J(x,n_2)&=-\binom{n_1-1}{n_2-1}x^{n_1-n_2}(1-x)^{n_2-1}<0\nonumber. 
\end{align}
Hence $J(x,n_2)$ is increasing for $n_2\in \left[2,n_1\right]$.
\end{proof}

\begin{proof}[\textbf{Proof of Proposition \ref{prop:monotonicity of v}}]
For brevity, we write $J(x, n_2)$ as $J(x)$ in the following proof. 

\medskip

\noindent\textit{\underline{Part 1: Proofs of (i) and the equivalence of (a) and (c) in (ii)}}.

\medskip

\noindent We begin by calculating $v'(x\mid n_2)$ as follows.
\begin{align*}
    v'(x\mid n_2) &=  \left(\frac{x}{J(F(x))}\right)'  = \frac{J(F(x)) - xf(x)J'(F(x))}{J^2(F(x))}.
\end{align*}
Let $W(x) = J(F(x)) - xf(x)J'(F(x))$, then $W'(x) = -x\left(f'(x)J'(F(x)) + f^2(x)J''(F(x))\right)$. Observe that 
\begin{align*}
    J'(x) &= \binom{n_1-1}{n_2-1} (n_1-n_2)x^{n_1-n_2-1}\bigl(1 - x \bigl)^{n_2-1},\\
    J''(x) &= \binom{n_1-1}{n_2-1} (n_1-n_2)x^{n_1-n_2-2}\bigl(1 - x \bigl)^{n_2-2}(n_1 - n_2 - 1 - (n_1 - 2)x),
\end{align*}
hence when $F(x) = x^\theta$, $$W'(x) = \binom{n_1-1}{n_2-1} (n_1-n_2) \theta x^{(n_1-n_2)\theta-1}\left(1 - x^\theta\right)^{n_2 - 2} \left(((n_1 + 1)\theta - 1)x^\theta - ((n_1 - n_2)\theta - 1) \right).$$

\noindent(1). When $(n_1 - n_2)\theta \leq 1$, then $\left(((n_1 - 1)\theta - 1)x^\theta - ((n_1 - n_2)\theta - 1) \right) \mid _{x=0} = 1 - (n_1 - n_2)\theta \geq 0$, and $\left(((n_1 - 1)\theta - 1)x^\theta - ((n_1 - n_2)\theta - 1) \right) \mid _{x=1} = (n_2 - 1)\theta > 0$. Observe that $((n_1 - 1)\theta - 1)x^\theta - ((n_1 - n_2)\theta - 1)$ is monotone in $[0, 1]$, we conclude that $W'(x)\geq 0$ in $[0, 1]$. Since $W(0) = 0$, we have $W(x) \geq 0$, and $v(a_i\mid n_2)$ is increasing in $[0, 1]$.

\medskip
\noindent(2). When $(n_1 - n_2)\theta > 1$, then $W(x)$ is decreasing in $\left[0, \left(\frac{(n_1 - n_2)\theta - 1)}{(n_1 - 1)\theta - 1}\right)^{\frac{1}{\theta}}\right]$, and increasing in $\left[ \left(\frac{(n_1 - n_2)\theta - 1)}{(n_1 - 1)\theta - 1}\right)^{\frac{1}{\theta}}, 1\right]$. Since $W(0) = 0$, and $W(1) = 1$, there exists $\hat{a} \in [0, 1)$ such that $W(x) \leq 0$ in $[0, \hat{a}]$ and $W(x) \geq 0$ in $[\hat{a}, 1]$, which implies that $v(a_i\mid n_2)$ decreases in $[0, \hat{a}]$ and increases in $[\hat{a}, 1]$.

\medskip
\noindent Steps (1) and (2) complete the proof of (i) and the equivalence of (a) and (c) in (ii).

\medskip

\noindent\textit{\underline{Part 2: Equivalence of (b) and (c) in (ii)}}.

\medskip

\noindent By L'Hôpital's  rule, we have
\begin{align*}
    v_i(0 \mid n_2)  = \lim_{a_i\rightarrow 0}\frac{a_i}{J(F(a_i))} = \frac{1}{\binom{n_1+1}{n_2-1} (n_1-n_2)}\lim_{a_i\rightarrow 0} \frac{1}{f(a_i)\cdot F^{n_1-n_2-1}(a_i)\cdot (1-F(a_i))^{n_2-1}},
\end{align*}
hence when $F(x) = x^\theta$, 
\begin{align*}
    v_i(0 \mid n_2) \text{ is finite}~\Longleftrightarrow~\lim_{a_i\rightarrow 0} \frac{1}{x^{(n_1-n_2)\theta - 1}} \text{ exists}~\Longleftrightarrow~(n_1 - n_2)\theta \leq 1.
\end{align*}
This completes the proof.
\end{proof}

\medskip

\begin{proof}[\textbf{Proof of Theorem \ref{thm_equilibrium strategy_general_prize}}]
The proof includes four steps: i) derive the unique symmetric and strictly increasing Bayesian Nash equilibrium $b(a_i\mid n_2)$ if exists; ii) check the obtained function $b(a_i\mid n_2)$ is indeed increasing; iii) show that $b(a_i\mid n_2)$ is indeed an equilibrium under the condition that $v(\cdot\mid n_2)$ is increasing; iv) if there exists strictly increasing Bayesian Nash equilibrium, $v(\cdot\mid n_2)$ must be increasing.

\smallskip

\noindent\textit{\underline{Step 1: Symmetric and strictly increasing BNE if exists}}

\smallskip

\noindent When all $n_2-1$ other admitted players follow a strictly increasing equilibrium strategy $b(\cdot)$ and player $i$ exerts effort $e_i = b(\tilde{a}_i)$ for some $\tilde{a}_i\in[0,1]$, the objective of player $i \in \mathcal{I}$ is to
\begin{equation}
\label{Eq_Equivalent Objective Function original}
        \max_{e_i}~u_{i} := \sum_{\ell=1}^{n_2} V_\ell P_{i,\ell}(e_i\mid b) - \frac{g(e_i)}{a_i} ,
\end{equation} 
where $P_{i,\ell}(e_i\mid b)$ is the probability that player $i$ wins the $\ell$-th prize, and 
\begin{align}
    P_{i,\ell}(e_i\mid b) & = \Pr\left(\text{$e_i$ ranks $\ell^{\text{th}}$ highest in $\mathcal{I}$} \right) \nonumber = \Pr\left(e_{i} < b(A_s\mid n_2), s \in \Tilde{\mathcal{I}}; e_{i}> b(A_k\mid n_2), k \in \mathcal{I}\setminus \Tilde{\mathcal{I}}\right)   \nonumber\\
    & = \Pr\left(\tilde{a}_i < A_s, s \in \Tilde{\mathcal{I}}; \tilde{a}_i > A_k, k \in \mathcal{I}\setminus \Tilde{\mathcal{I}}\right)  \nonumber = \tbinom{n_2-1}{\ell-1} \underbrace{\int_{\tilde{a}_i}^1\cdots \int_{\tilde{a}_i}^1}_{\ell-1} \underbrace{\int_0^{\tilde{a}_i}\cdots\int_0^{\tilde{a}_i}}_{n_2-\ell} \beta_{i}\left(a_{-i}\mid s,a_{i}\right) da_{-i} \nonumber. 
\end{align} 
Since $P_{i,\ell}(e_i\mid b)$ only depends on $\tilde{a}_i$ and $a_i$, we rewrite $P_{i,\ell}(e_i\mid b)$ as $P_{i,\ell}(\tilde{a}_i,a_i)$. Equation (\ref{Eq_Equivalent Objective Function original}) is equivalent to 
\begin{equation}
\label{Eq_Equivalent Objective Function}
\max_{\tilde{a}_i}~U_i(\tilde{a}_i,a_i\mid n_2, b):= a_i\sum_{\ell=1}^{n_2} V_\ell P_{i,\ell}(\tilde{a}_i,a_i) - g(b(\tilde{a}_i\mid n_2)) ,
\end{equation} 
We now calculates $P_{i,\ell}(\tilde{a}_i,a_i)$ as follows.

\vspace{0.5em}

\noindent\textit{\underline{Case 1: $\ell<n_2$}}.
\medskip

\noindent(i). When $\tilde{a}_i\leq a_i$,
\begin{align}
    P_{i,\ell}(\tilde{a}_i,a_i) & = \tbinom{n_2-1}{\ell-1} \underbrace{\int_{\tilde{a}_i}^1\cdots \int_{\tilde{a}_i}^1}_{\ell-1} \underbrace{\int_0^{\tilde{a}_i}\cdots\int_0^{\tilde{a}_i}}_{n_2-\ell} \beta_{i}\left(a_{-i}\mid s,a_{i}\right) da_{-i} \nonumber\\
        & = \tbinom{n_2-1}{\ell-1} \left (\prod_{s\in \Tilde{\mathcal{I}}}\int_{\tilde{a}_i}^1 f(a_s)da_s\right) \underbrace{\int_0^{\tilde{a}_i}\cdots\int_0^{\tilde{a}_i}}_{n_2-\ell}   \frac{F^{n_1-n_2}\left(a_{(1)}\right)}{I(F(a_i),n_2)}\prod_{k\in \mathcal{I}_{-i}\setminus \Tilde{\mathcal{I}}}\left(f(a_k)da_k \right)\nonumber  \\
        & = \tbinom{n_2-1}{\ell-1} \frac{\bigl(1 - F(\tilde{a}_i) \bigl)^{\ell-1}}{I(F(a_i),n_2)} \underbrace{F^{n_1-n_2+n_2-\ell}(\tilde{a}_i)  \frac{(n_1-n_2)!(n_2-\ell)!}{(n_1-n_2+n_2-\ell)!}}_{\text{Lemma \ref{Auxiliary Lemma_Integral}}} \label{Eq_Auxiliary Lemma_Seq-Equlibrium proof} \\
    & = \frac{1}{I(F(a_i),n_2)} \tbinom{n_2-1}{\ell-1}\tbinom{n_1-\ell}{n_2-\ell}\big(1 - F(\tilde{a}_i) \big)^{\ell-1} F^{n_1-\ell}(\tilde{a}_i) \nonumber \\
    &= \frac{1}{J(F(a_i),n_2)} \left({F}_{(\ell,n_1-1)}(\tilde{a}_i) - {F}_{(\ell - 1,n_1-1)}(\tilde{a}_i)\right)\label{eq:l<n_2 left}
\end{align} 
where $a_{(1)}  = \min \left\{a_j: j\in \mathcal{I}_{-i} \right\}$ and $J(F(a_i),n_2)  := \binom{n_1-1}{n_2-1}\cdot I(F(a_i),n_2)$.
Since $\max \left\{a_j:j\in \mathcal{I}_{-i}\setminus \Tilde{\mathcal{I}} \right\} < \tilde{a}_i$ and $\min \left\{a_j: j\in \Tilde{\mathcal{I}}\right\} > \tilde{a}_i$, we have $a_{(1)} = \min \left\{a_j:j\in \mathcal{I}_{-i}\setminus \Tilde{\mathcal{I}} \right\}$, then Lemma \ref{Auxiliary Lemma_Integral} can be applied in equation (\ref{Eq_Auxiliary Lemma_Seq-Equlibrium proof}).

\medskip

\noindent(ii). When $\tilde{a}_i\geq a_i$, by the similar calculation of step (i) and Lemma \ref{Auxiliary Lemma_Integral}, we have 
\begin{align}
    P_{i,\ell}(\tilde{a}_i,a_i)= & \tbinom{n_2-1}{\ell-1}\frac{\bigl(1 - F(\tilde{a}_i) \bigl)^{\ell-1} }{I(F(a_i),n_2)}Q(a_i,\tilde{a}_i\mid n_1,n_2,n_2 - \ell)\label{eq:l<n_2 right}\\
    =&\tbinom{n_2-1}{\ell-1}\frac{\bigl(1 - F(\tilde{a}_i) \bigl)^{\ell-1} }{I(F(a_i),n_2)} \int_{0}^{F(a_i)}(F(\tilde{a}_i) - t)^{n_2-\ell}dt^{n_1-n_2}\nonumber.
\end{align} 

\vspace{0.5em}
\noindent\textit{\underline{Case 2: $\ell=n_2$}}. 

\medskip
\noindent(iii). When $\tilde{a}_i\leq a_i$, Lemma \ref{Auxiliary Lemma_Integral} gives that  
\begin{align}
    P_{i,\ell}(\tilde{a}_i,a_i) & = \tbinom{n_2-1}{n_2-1} \underbrace{\int_{\tilde{a}_i}^1\cdots \int_{\tilde{a}_i}^1}_{n_2-1}  \beta_{i}\left(a_{-i}\mid s,a_{i}\right) da_{-i}=\frac{R(\tilde{a}_i, a_i\mid n_1,n_2,n_2 - 1)}{I(F(a_i),n_2)} \label{eq:l=n_2 left}.
\end{align} 

\medskip
\noindent(iv).
When $\tilde{a}_i\geq a_i$,
\begin{align}
    P_{i,\ell}(\tilde{a}_i,a_i) & = \tbinom{n_2-1}{n_2-1} \underbrace{\int_{\tilde{a}_i}^1\cdots \int_{\tilde{a}_i}^1}_{n_2-1}  \beta_{i}\left(a_{-i}\mid s,a_{i}\right) da_{-i}=\underbrace{\int_{\tilde{a}_i}^1\cdots \int_{\tilde{a}_i}^1}_{n_2-1}  \frac{F^{n_1-n_2}(a_i)}{I(F(a_i),n_2)}\prod_{k \in \mathcal{I}_{-i}}\left(f(a_k)da_k\right) \nonumber  \\
    & = \frac{F^{n_1-n_2}(a_i)}{I(F(a_i),n_2)}\left((1 - F(\tilde{a}_i) \right)^{n_2-1} \label{eq:l=n_2 right}.
\end{align} 
If there exists a unique strictly increasing Bayesian Nash equilibrium strategy, by the first order condition, $b(a_i\mid n_2)$ must satisfy that 
\begin{align*}
    \frac{\partial}{\partial \tilde{a}_i} U_i(\tilde{a}_i,a_i\mid n_2, b) \bigg|_{\tilde{a}_i = a_i} = \left(a_i\sum_{\ell = 1}^{n_2}V_{\ell}\frac{\partial P_{i,\ell}(\tilde{a}_i,a_i)}{\partial \tilde{a}_i} - \frac{\partial g(b(\tilde{a}_i\mid n_2))}{\partial \tilde{a}_i} \right) \bigg |_{\tilde{a}_i = a_i} = 0.
\end{align*}
Since $\beta_{i}\left(a_{-i}\mid s,a_{i}\right)$ is continuous, $U_{i}(\tilde{a}_i,a_i)$ is differentiable with respect to $\tilde{a}_i$, calculating $\frac{\partial}{\partial \tilde{a}_i} U_i(\tilde{a}_i,a_i\mid n_2, b)$ is equivalent to calculating the left or right derivative, which gives 
\begin{align*}
    &\frac{\partial}{\partial \tilde{a}_i}U_i(\tilde{a}_i,a_i\mid n_2, b)\\
    =& v(a_i\mid n_2) \sum_{\ell = 1}^{n_2 - 1} V_{\ell}\left(F'_{(\ell,n_1-1)}(a_i) - F'_{(\ell-1,n_1-1)}(a_i)\right) + \frac{a_iF^{n_1-n_2}(a_i)}{I(F(a_i),n_2)}V_{n_2}\left(\left((1 - F(\tilde{a}_i) \right)^{n_2-1}\right)' - \left(g(b(a_i\mid n_2))\right)'.
\end{align*}
Hence $\frac{\partial}{\partial \tilde{a}_i}U_i(\tilde{a}_i,a_i\mid n_2, b)\bigg|_{\tilde{a}_i = a_i} = 0$ implies that 
\begin{align*}
    \left(g(b(a_i\mid n_2))\right)' 
    & = v(a_i\mid n_2)  \sum_{\ell = 1}^{n_2 - 1} V_{\ell}\left(F'_{(\ell,n_1-1)}(a_i) - F'_{(\ell-1,n_1-1)}(a_i)\right) - v(a_i\mid n_2) F'_{(n_2 - 1,n_1-1)}(a_i)\\
    & = v(a_i\mid n_2)\sum\limits_{\ell=1}^{n_2-1}(V_\ell-V_{\ell+1})F'_{(\ell,n_1-1)}(a_i),
\end{align*}
where the penultimate equality uses the fact that 
\begin{align*}
    \frac{a_i F^{n_1-n_2}(a_i)}{I(F(a_i),n_2)} \left(\left(1 - F(a_i)\right)^{n_2-1}\right)'
    &= (n_2 - 1) v(a_i\mid n_2) \tbinom{n_1-1}{n_2-1}V_{n_2} F^{n_1-n_2}(a_i)\left((1 - F(a_i) \right)^{n_2-2}f(a_i)\\
    &= v(a_i\mid n_2) F'_{(n_2 - 1,n_1-1)}(a_i).
\end{align*}
Since $b(0\mid n_2) = 0$, the equilibrium effort can be given as follows.
\begin{align*}
    b(a_i\mid n_2) =g^{-1}\left(   \int_{0}^{a_i}v(x\mid n_2)\sum\limits_{\ell=1}^{n_2-1}(V_\ell-V_{\ell+1})dF_{(\ell,n_1-1)}(x)\right).
\end{align*}

\noindent\textit{\underline{Step 2: $b(a_i\mid n_2)$ is indeed strictly increasing.}}

\medskip

\noindent This can be immediately checked by observing that when the prizes are not all identical,
\begin{align*}
    \left(\int_{0}^{a_i}v(x\mid n_2)\sum\limits_{\ell=1}^{n_2-1}(V_\ell-V_{\ell+1})dF_{(\ell,n_1-1)}(x)\right)' = v(a_i\mid n_2)\sum\limits_{\ell=1}^{n_2-1}(V_\ell-V_{\ell+1})dF_{(\ell,n_1-1)}(a_i)>0.
\end{align*}
\noindent\textit{\underline{Step 3: $b(a_i\mid n_2)$ is indeed an equilibrium under the condition $v(a_i \mid n_2)$ is increasing.}}

\medskip

\noindent To show this, it's equivalent to show that
\begin{align}\label{eq: utility_equilibrium_check}
    U_i(\tilde{a}_i,a_i\mid n_2, b) = a_i\sum_{\ell=1}^{n_2} V_\ell P_{i,\ell}(\tilde{a}_i, a_i) - g(b(\tilde{a}_i\mid n_2))
\end{align}
indeed obtains the maximum at $\tilde{a}_i = a_i$. We will show that $U_i(\tilde{a}_i, a_i)$ increases with $\tilde{a}_i \in [0, a_i]$, and decreases with $\tilde{a}_i \in [a_i, 1]$.

\noindent(a). When $\tilde{a}_i\leq a_i$, plug equations (\ref{eq:l<n_2 left}) and (\ref{eq:l=n_2 left}) into equation (\ref{eq: utility_equilibrium_check}), we obtain
\begin{align*}
    U_i(\tilde{a}_i,a_i\mid n_2, b) = &a_i\sum_{\ell=1}^{n_2-1}  \frac{V_\ell}{J(F(a_i),n_2)} \left({F}_{(\ell,n_1-1)}(\tilde{a}_i) - {F}_{(\ell + 1,n_1-1)}(\tilde{a}_i)\right) \\
    &+ \frac{a_iV_{n_2}}{I(F(a_i),n_2)}R(\tilde{a}_i, a_i\mid n_1,n_2,n_2 - 1) 
    - \int_{0}^{a_i}v(x\mid n_2)\sum\limits_{\ell=1}^{n_2-1}(V_\ell-V_{\ell+1})dF_{(\ell,n_1-1)}(x),
\end{align*}
where the definition of $R(\cdot, \cdot\mid \cdot,\cdot,\cdot)$ is given by Lemma \ref{Auxiliary Lemma_Integral}.
Observe that by Lemma \ref{Auxiliary Lemma_Integral} and the monotonically increasing property of $v(\cdot \mid n_2)$,
\begin{align}
    &\frac{\partial U_i(\tilde{a}_i,a_i\mid n_2, b)}{\partial \tilde{a}_i}\nonumber\\
    =& v(a_i\mid n_2)\sum_{\ell=1}^{n_2-1} V_{\ell} \bigl(F'_{(\ell,n_1-1)}(\tilde{a}_i) - F'_{(\ell-1,n_1-1)}(\tilde{a}_i)\bigl) -  \frac{ (n_2-1) a_i V_{n_2}}{I(F(x),n_2)}F^{n_1-n_2}(\tilde{a}_i) \bigl(1 - F(\tilde{a}_i)\bigl)^{n_2-2} f(\tilde{a}_i) \biggl)\nonumber\\
    &- v(\tilde{a}_i\mid n_2)\sum\limits_{\ell=1}^{n_2-1}(V_\ell-V_{\ell+1})F'_{(\ell,n_1-1)}(\tilde{a}_i)\nonumber\\
    =& \left(v({a}_i\mid n_2) - v(\tilde{a}_i\mid n_2) \right)\sum\limits_{\ell=1}^{n_2-1}(V_\ell-V_{\ell+1})F'_{(\ell,n_1-1)}(\tilde{a}_i)\geq 0 \label{eq:partial derivative of U}.
\end{align}
Hence $U_i(\tilde{a}_i, a_i)$ is increasing in $\tilde{a}_i \in [0, a_i]$.

\medskip
\noindent(b). When $\tilde{a}_i\geq a_i$, plug equations (\ref{eq:l<n_2 right}) and (\ref{eq:l=n_2 right}) into equation (\ref{eq: utility_equilibrium_check}), we obtain
\begin{align*}
    U_i(\tilde{a}_i,a_i\mid n_2, b) = &a_i\sum_{\ell=1}^{n_2-1}  \tbinom{n_2-1}{\ell-1}\frac{V_{\ell}}{I(F(a_i),n_2)}\bigl(1 - F(\tilde{a}_i) \bigl)^{\ell-1} Q(a_i,\tilde{a}_i\mid n_1,n_2,n_2 - \ell) \\
    &+ \frac{a_iV_{n_2}}{I(F(a_i),n_2)}F^{n_1-n_2}(a_i)\left((1 - F(\tilde{a}_i) \right)^{n_2-1} - \int_{0}^{a_i}v(x\mid n_2)\sum\limits_{\ell=1}^{n_2-1}(V_\ell-V_{\ell+1})F'_{(\ell,n_1-1)}(x)\\
    =&v(a_i\mid n_2)\sum_{\ell=1}^{n_2-1}  \tbinom{n_2-1}{\ell-1}\tbinom{n_1-1}{n_2-1}V_{\ell}\bigl(1 - F(\tilde{a}_i) \bigl)^{\ell-1} Q(a_i,\tilde{a}_i\mid n_1,n_2,n_2 - \ell) \\
    &+  v(a_i\mid n_2) \tbinom{n_1-1}{n_2-1}V_{n_2} F^{n_1-n_2}(a_i)\left((1 - F(\tilde{a}_i) \right)^{n_2-1} - \int_{0}^{a_i}v(x\mid n_2)\sum\limits_{\ell=1}^{n_2-1}(V_\ell-V_{\ell+1})F'_{(\ell,n_1-1)}(x)\\
     = &v(a_i\mid n_2)S(a_i,\tilde{a}_i\mid n_1,n_2) - \int_{0}^{a_i}v(x\mid n_2)\sum\limits_{\ell=1}^{n_2-1}(V_\ell-V_{\ell+1})F'_{(\ell,n_1-1)}(x),
\end{align*}
where
\begin{align*}
    &S(a_i,\tilde{a}_i\mid n_1,n_2)\\ = &\sum_{\ell=1}^{n_2-1}  \tbinom{n_2-1}{\ell-1}\tbinom{n_1-1}{n_2-1}V_{\ell}\bigl(1 - F(\tilde{a}_i) \bigl)^{\ell-1} Q(a_i,\tilde{a}_i\mid n_1,n_2,n_2 - \ell)+ \tbinom{n_1-1}{n_2-1}V_{n_2} F^{n_1-n_2}(a_i)\left((1 - F(\tilde{a}_i) \right)^{n_2-1},
\end{align*}
and the definition of $Q(\cdot, \cdot\mid \cdot,\cdot,\cdot)$ is given by Lemma \ref{Auxiliary Lemma_Integral}.
By Lemma \ref{Auxiliary Lemma_prize}, we can conclude that 
\begin{align*}
    \frac{\partial S(a_i,\tilde{a}_i\mid n_1,n_2)}{\partial \tilde{a}_i} \leq & \frac{\partial S(\tilde{a}_i,\tilde{a}_i\mid n_1,n_2)}{\partial \tilde{a}_i} = \sum\limits_{\ell=1}^{n_2-1}(V_\ell-V_{\ell+1})F'_{(\ell,n_1-1)}(\tilde{a}_i).
\end{align*}
Therefore, 
\begin{align*}
    \frac{\partial U_i(\tilde{a}_i,a_i\mid n_2, b)}{\partial \tilde{a}_i}
    &\leq v(a_i\mid n_2)\sum\limits_{\ell=1}^{n_2-1}(V_\ell-V_{\ell+1})F'_{(\ell,n_1-1)}(\tilde{a}_i) - v(\tilde{a}_i\mid n_2)\sum\limits_{\ell=1}^{n_2-1}(V_\ell-V_{\ell+1})F'_{(\ell,n_1-1)}(\tilde{a}_i) \\
    &= (v(a_i\mid n_2) - v(\tilde{a}_i\mid n_2)) \sum\limits_{\ell=1}^{n_2-1}(V_\ell-V_{\ell+1})F'_{(\ell,n_1-1)}(\tilde{a}_i) \leq 0,
\end{align*}
where the last inequality uses the monotonically increasing property of $v(\cdot \mid n_2)$. This indicates that $U_i(\tilde{a}_i, a_i)$ is decreasing in $\tilde{a}_i \in [a_i, 1]$.
Combining (a) and (b), we can conclude that $U_i(\tilde{a}_i, a_i)$ indeed obtains the maximum at $\tilde{a}_i = a_i$.

\noindent\textit{\underline{Step 4: If there exists strictly increasing Bayesian Nash equilibrium, $v(\cdot\mid n_2)$ must be increasing.}}

\medskip

\noindent From step 1, we know that if there exists strictly increasing Bayesian Nash equilibrium, the equilibrium must be $b(a_i\mid n_2)$ given in equation (\ref{eq_b(a_i)_general_prize}). Hence we need to show that $b(a_i\mid n_2)$ is the equilibrium implying $v(\cdot\mid n_2)$ is increasing. We derive this by making a contradiction.

\noindent If $v(\cdot\mid n_2)$ is not increasing, then there exists some interval $[\underline{a}, \overline{a}]$ on which $v(\cdot\mid n_2)$ is strictly decreasing. Pick $a_i \in (\underline{a}, \overline{a})$, it's sufficient to show that 
\begin{align*}
    \text{argmax}_{\tilde{a}_i}\frac{\partial U_i(\tilde{a}_i,a_i\mid n_2, b)}{\partial \tilde{a}_i} \neq a_i.
\end{align*}
Indeed, when $\tilde{a}_i \leq a_i$, by equation (\ref{eq:partial derivative of U}), the fact $v(\cdot\mid n_2)$ is strictly decreasing on $[\underline{a}, \overline{a}]$, and Lagrangian's mean value theorem, we can obtain 
\begin{align*}
    &U_i(\underline{a}, a_i\mid n_2, b) - U_i(a_i, a_i\mid n_2, b)\\
    =& \frac{\partial U_i(\tilde{a}_i, a_i\mid n_2, b)}{\partial \tilde{a}_i}\bigg|_{\tilde{a}_i = \xi} (\underline{a} - a_i) = (\underline{a} - a_i)\left(v({a}_i\mid n_2) - v(\xi\mid n_2) \right)\sum\limits_{\ell=1}^{n_2-1}(V_\ell-V_{\ell+1})F'_{(\ell,n_1-1)}(\xi) > 0,
\end{align*}
where $\xi \in (\underline{a}, a_i)$.

\medskip

\noindent Steps 1 to 4 complete the proof.
\end{proof}

\medskip

\begin{proof}[\textbf{Proof of Corollary \ref{coro:comparison_n_2_n_2_1}}]
By Lemma \ref{lemma_J_increasing_n_2}, we know that given $n_1$, for any $x\in [0,1]$, $J(x,n_2)$ is increasing in $n_2\in[2,n_1]$, hence $v(x\mid n_2)$ is decreasing in $n_2\in[2,n_1]$.
When $V_{n_2} = 0$, combining the formula
\begin{equation*}
b(a_i\mid n_2) = 
   g^{-1}\left(   \int_{0}^{a_i}v(x\mid n_2)\sum\limits_{\ell=1}^{n_2-1}(V_\ell-V_{\ell+1})dF_{(\ell,n_1-1)}(x)\right),
\end{equation*} 
$g(\cdot)$ is increasing, and  $dF_{(\ell,n_1-1)}(x)>0 $ for $1\leq \ell \leq n_2 - 1$, we conclude that $b(a_i\mid n_2 - 1)\geq b(a_i\mid n_2)$, which completes the proof.
\end{proof}

\medskip

\begin{proof}[\textbf{Proof of Theorem \ref{thm:opt_number}}]
Under the conditions (i) and (ii), by Theorem \ref{thm_equilibrium strategy_general_prize}, 
\begin{equation*}
b(a_i\mid n_2) = 
   g^{-1}\left(   \int_{0}^{a_i}v(x\mid n_2)(V_1-V_{2})dF_{(\ell,n_1-1)}(x)\right).
\end{equation*} 
Similar to the discussion in Corollary \ref{coro:comparison_n_2_n_2_1}, we have $v(x\mid n_2)$ is decreasing in $n_2\in[2,n_1]$, hence $b(a_i\mid n_2)$ obtains the maximum when $n_2 = 2$, $V_1 = 1$, and $V_2 = 0$. Theorem \ref{thm:opt_number} then follows by observing that the expected highest effort is $\int_0^1 b(a_i\mid n_2)\,dF^{n_1}(a_i)$.
\end{proof}

\medskip

\section{Auxiliary Results}
\label{sec_appendix_auxiliaryreults}

We provide 
a useful identity (Appendix \ref{sec_appendix_identity}) in this section.

\label{sec_appendix_identity}

\label{sec_appendix_identity}
\begin{lemma}
\label{Auxiliary Lemma_Integral}
For any distribution function $F(\cdot)$ with continuous density $f(\cdot)$ in the support of $[0,1]$, $m\leq n$, and $0\leq x\leq y \leq 1$,
\begin{align*}
   Q(x,y\mid n,m,k) = &\underbrace{\int_0^y\cdots \int_0^y}_{k} \left(F^{n-m}(t_{(1)})\wedge F^{n - m}(x)\right) \prod_{i=1}^kf(t_i) dt_1\cdots dt_k\\
   =& \sum_{j=0}^{k}\frac{\binom{k}{j}}{\binom{n-m+j}{j}}F^{n-m+j}(x)(F(y) - F(x))^{k-j}\\
   =& \int_{0}^{F(x)}(F(y) - t)^kdt^{n-m}, 
\end{align*}
where $t_{(1)} = \normalfont{\min}\{t_1,t_2,\cdots,t_k\}$.
Specifically, 
\[
Q(x,x\mid n,m,k) = \underbrace{\int_0^x\cdots \int_0^x}_{k} F^{n-m}(t_{(1)})\prod_{i=1}^kf(t_i) dt_1\cdots dt_k =\frac{1}{\binom{n-m+k}{k}}F^{n-m+k}(x) ,\ \  \forall x\in [0,1].
\]
In addition, 
\begin{align*}
   R(x,y\mid n,m,k) = &\underbrace{\int_x^1\cdots \int_x^1}_{k} \left(F^{n-m}(t_{(1)})\wedge F^{n - m}(y)\right) \prod_{i=1}^kf(t_i) dt_1\cdots dt_k\\
   = &\sum_{j=1}^{k}\binom{k}{j}j(F(y) - F(x))^{k-j}\int_{F(x)}^{F(y)}t^{n-m}\left(F(y)-t \right)^{j-1}dt + F^{n-m}(y)(1 - F(y))^k.
\end{align*}
\end{lemma}
 
\begin{proof}
We have
\begin{align*}
     &Q(x,y\mid n,m,k)\nonumber\\ = &\sum_{j=0}^{k} \binom{k}{j}\underbrace{\int_0^x\cdots \int_0^x}_{j}\underbrace{\int_{x}^y\cdots \int_{x}^y}_{k-j} \left(F^{n-m}(t_{(1)})\wedge F^{n-m}(x) \right)\prod_{i=1}^k(f(t_i) dt_i)
     \\=&\sum_{j=1}^{k}\binom{k}{j}(F(y) - F(x))^{k-j}\frac{j!}{(j-1)!}\int_0^x\bigg(\underbrace{\int_{t_1}^x\cdots \int_{t_1}^x}_{j-1} F^{n-m}(t_{1}) \prod_{i=1}^jf(t_i) dt_2\cdots dt_j \bigg)dt_1 + F^{n-m}(x)(F(y) - F(x))^k
     \\=&\sum_{j=1}^{k}\binom{k}{j}j(F(y) - F(x))^{k-j}\int_0^xF^{n-m}(t_{1})\left(F(x)-F(t_1) \right)^{j-1}dF(t_1) + F^{n-m}(x)(F(y) - F(x))^k\\
     =& \sum_{j=1}^{k}\binom{k}{j}j(F(y) - F(x))^{k-j}F^{n-m+j}(x)\int_0^1t^{n-m}\left(1-t\right)^{j-1}dt + F^{n-m}(x)(F(y) - F(x))^k
     \\=& \sum_{j=1}^{k}\frac{\binom{k}{j}}{\binom{n-m+j}{j}}F^{n-m+j}(x)(F(y) - F(x))^{k-j} + F^{n-m}(x)(F(y) - F(x))^k 
     \\=& \sum_{j=0}^{k}\frac{\binom{k}{j}}{\binom{n-m+j}{j}}F^{n-m+j}(x)(F(y) - F(x))^{k-j}\\
     =& \int_{0}^{F(x)}(F(y) - t)^kdt^{n-m}.
\end{align*}
Specifically, when $y = x$,
\begin{align*}
    Q(x,x\mid n,m,k) 
     = \frac{1}{\binom{n-m+k}{k}}F^{n-m+k}(x).
\end{align*}
In addition, 
\begin{align*}
     &R(x,y\mid n,m,k)\nonumber\\ = &\sum_{j=0}^{k} \binom{k}{j}\underbrace{\int_x^y\cdots \int_x^y}_{j}\underbrace{\int_{y}^1\cdots \int_{y}^1}_{k-j} \left(F^{n-m}(t_{(1)})\wedge F^{n-m}(y) \right)\prod_{i=1}^k(f(t_i) dt_i)
     \\=&\sum_{j=1}^{k}\binom{k}{j}(1 - F(y))^{k-j}\frac{j!}{(j-1)!}\int_x^y\bigg(\underbrace{\int_{t_1}^y\cdots \int_{t_1}^y}_{j-1} F^{n-m}(t_{1}) \prod_{i=1}^jf(t_i) dt_2\cdots dt_j \bigg)dt_1 + F^{n-m}(y)(1 - F(y))^k
     \\=&\sum_{j=1}^{k}\binom{k}{j}j(F(y) - F(x))^{k-j}\int_x^yF^{n-m}(t_{1})\left(F(y)-F(t_1) \right)^{j-1}dF(t_1) + F^{n-m}(y)(1 - F(y))^k\\
     =& \sum_{j=1}^{k}\binom{k}{j}j(F(y) - F(x))^{k-j}\int_{F(x)}^{F(y)}t^{n-m}\left(F(y)-t \right)^{j-1}dt + F^{n-m}(y)(1 - F(y))^k.
\end{align*}
This completes the proof.
\end{proof}

\medskip

\begin{lemma}
\label{Auxiliary Lemma_derivative}
Let $F(\cdot)$, $Q(\cdot,\cdot)$, $n$, $m$ and 
$k$ be the same as defined in Lemma \ref{Auxiliary Lemma_Integral}.
For $0 \leq x \leq y \leq 1 $, $\ell \leq m$, 
\begin{itemize}
    \item $Q(x,y)\geq Q(x,x) = \frac{1}{\binom{n-m+k}{k}}F^{n-m+k}(x)$, 
    \item $\frac{\partial Q(x,y\mid n,m,k)}{\partial y}\leq \frac{1}{\binom{n-m+k}{k}}\left(F^{n - m + k}(y)\right)'$, where the equality holds when $x = y$.
\end{itemize}
 
\end{lemma}
\begin{proof}
    Observe that 
    \begin{align*}
        \frac{\partial Q}{\partial y}(x,y\mid n,m,k) = kf(y)  \int_{0}^{F(x)}(F(y) - t)^{k-1}dt^{n-m}\geq0,
    \end{align*}
hence $Q(x,y\mid n,m,k)$ is increasing in $y \in [x,1]$, which implies that $Q(x,y)\geq Q(x,x)$.

\noindent Furthermore, we have 
\begin{align*}
    \frac{\partial^2 Q(x,y\mid n,m,k)}{\partial y \partial x} = k(n-m) F^{n-m-1}(x)(F(y) - F(x))^{k-1}f(x)f(y) \geq 0.
\end{align*}
Then $\frac{\partial Q(x,y\mid n,m,k)}{\partial y}$ is increasing with $x$ in $[0,y]$, which indicates that 
\begin{align*}
    \frac{\partial Q(x,y\mid n,m,k)}{\partial y} &\leq \frac{\partial Q(y,y\mid n,m,k)}{\partial y} = k(n-m)f(y)\int_0^{F(y)}t^{n-m-1}\left(F(y)-t \right)^{k-1}dt\\
    &=k(n-m)f(y)F^{n-m+k-1}(y)\int_0^1t^{n-m-1}\left(1-t \right)^{k-1}dt\\
    & = \frac{1}{\binom{n-m+k}{k}}\left(F^{n - m + k}(y)\right)'.
\end{align*}
This completes the proof of the lemma.
\end{proof}

\begin{lemma}
\label{Auxiliary Lemma_prize}
Let $n_2 \leq n_1$. For $F(\cdot)$ and $Q(\cdot,\cdot)$ defined in Lemma \ref{Auxiliary Lemma_Integral}, and $0 \leq a_i \leq \tilde{a}_i \leq 1 $,
define 
\begin{align*}
&S(a_i,\tilde{a}_i\mid n_1,n_2)\\ = &\sum_{\ell=1}^{n_2-1}  \tbinom{n_2-1}{\ell-1}\tbinom{n_1-1}{n_2-1}V_{\ell}\bigl(1 - F(\tilde{a}_i) \bigl)^{\ell-1} Q(a_i,\tilde{a}_i\mid n_1,n_2,n_2 - \ell)+ \tbinom{n_1-1}{n_2-1}V_{n_2} F^{n_1-n_2}(a_i)\left((1 - F(\tilde{a}_i) \right)^{n_2-1}.
\end{align*}
Then 
\begin{align*}
    \frac{\partial S(a_i,\tilde{a}_i\mid n_1,n_2)}{\partial \tilde{a}_i} \leq & \frac{\partial S(\tilde{a}_i,\tilde{a}_i\mid n_1,n_2)}{\partial \tilde{a}_i} = \sum\limits_{\ell=1}^{n_2-1}(V_\ell-V_{\ell+1})F'_{(\ell,n_1-1)}(\tilde{a}_i),
\end{align*}
where for $1\leq \ell \leq n_1$, $F_{(\ell,n)}(\cdot)$ is the distribution function of the $\ell^{\textrm{th}}$ largest order statistic among $n$ i.i.d. random variables with distribution $F(\cdot)$, and for $x \in[0,1]$,
\begin{align*}
F_{(\ell,n_1)}(x) =& \sum_{j=n_1-\ell+1}^{n_1} \binom{n_1}{j}F^j(x)\bigl(1 - F(x) \bigl)^{n_1-j},\\
F'_{(\ell,n_1)}(x) =& \frac{n_1!}{(n_1-\ell)!(\ell-1)!}F^{n_1-\ell}(x)\bigl(1 - F(x) \bigl)^{\ell-1}f(x).
\end{align*}
\end{lemma}
\begin{proof}
We start by calculating $\partial S(a_i,\tilde{a}_i\mid n_1,n_2)/\partial \tilde{a}_i$. 
\begin{align*}
    \frac{\partial S(a_i,\tilde{a}_i\mid n_1,n_2)}{\partial \tilde{a}_i} =& - \tbinom{n_1-1}{n_2-1}\sum_{\ell=2}^{n_2-1} \tbinom{n_2-1}{\ell-1}  (\ell -1) V_{\ell}\bigl((1 - F(\tilde{a}_i) \bigl)^{\ell-2}Q(a_i,\tilde{a}_i\mid n_1,n_2,n_2 - \ell) f(\tilde{a}_i)\\
    & + \tbinom{n_1-1}{n_2-1}\sum_{\ell=1}^{n_2-1}  \tbinom{n_2-1}{\ell-1}V_{\ell} \bigl(1 - F(\tilde{a}_i) \bigl)^{\ell-1} \frac{\partial Q(a_i,\tilde{a}_i\mid n_1,n_2,n_2 - \ell)}{\partial \tilde{a}_i}\\
    & - (n_2 - 1)\tbinom{n_1-1}{n_2-1}V_{n_2} F^{n_1-n_2}(a_i)\left((1 - F(\tilde{a}_i) \right)^{n_2-2}f(\tilde{a}_i).
\end{align*}
 We then show that $\partial S(a_i,\tilde{a}_i\mid n_1,n_2)/\partial \tilde{a}_i$ is increasing in $a_i \in [0, \tilde{a}_i]$. This can be checked by
\begin{align*}
    &\frac{\partial S^2(a_i,\tilde{a}_i\mid n_1,n_2)}{\partial \tilde{a}_i \partial {a}_i}\\ =& - \tbinom{n_1-1}{n_2-1}\sum_{\ell=2}^{n_2-1}  \tbinom{n_2-1}{\ell-1} (\ell -1)V_{\ell}\bigl((1 - F(\tilde{a}_i) \bigl)^{\ell-2}\frac{\partial Q(a_i,\tilde{a}_i\mid n_1,n_2,n_2 - \ell)}{\partial a_i}f(\tilde{a}_i)
    \\
    & + \tbinom{n_1-1}{n_2-1}\sum_{\ell=1}^{n_2-1}  \tbinom{n_2-1}{\ell-1}V_{\ell} \bigl(1 - F(\tilde{a}_i) \bigl)^{\ell-1} \frac{\partial Q^2(a_i,\tilde{a}_i\mid n_1,n_2,n_2 - \ell)}{\partial \tilde{a}_i \partial {a}_i}\\
    & - (n_1 - n_2)(n_2 - 1)\tbinom{n_1-1}{n_2-1}V_{n_2} F^{n_1-n_2-1}(a_i)\left((1 - F(\tilde{a}_i) \right)^{n_2-2}f(\tilde{a}_i)f({a}_i)\\
    =&  -(n_1 - n_2)\tbinom{n_1-1}{n_2-1}F^{n_1 - n_2 -1}({a}_i)\sum_{\ell=2}^{n_2}  \tbinom{n_2-1}{\ell-1} (\ell -1)V_{\ell}\bigl((1 - F(\tilde{a}_i) \bigl)^{\ell-2}(F(\tilde{a}_i) - F({a}_i))^{n_2 - \ell}f(\tilde{a}_i)f({a}_i)\\
    &+ (n_1 - n_2) \tbinom{n_1-1}{n_2-1} F^{n_1 - n_2 -1}({a}_i)\sum_{\ell=1}^{n_2-1}  \tbinom{n_2-1}{\ell-1} (n_2 - \ell)V_{\ell}\bigl(1 - F(\tilde{a}_i) \bigl)^{\ell-1} (F(\tilde{a}_i) - F({a}_i))^{n_2 - \ell -1 }f(\tilde{a}_i)f({a}_i) \\
    =&  -(n_1 - n_2)\tbinom{n_1-1}{n_2-1}F^{n_1 - n_2 -1}({a}_i)\sum_{\ell=1}^{n_2 - 1}  \tbinom{n_2-1}{\ell} \ell V_{\ell + 1} \bigl((1 - F(\tilde{a}_i) \bigl)^{\ell-1}(F(\tilde{a}_i) - F({a}_i))^{n_2 - \ell -1}f(\tilde{a}_i)f({a}_i)\\
    &+ (n_1 - n_2)  \tbinom{n_1-1}{n_2-1}F^{n_1 - n_2 -1}({a}_i)\sum_{\ell=1}^{n_2-1}  \tbinom{n_2-1}{\ell-1} (n_2 - \ell) V_{\ell}\bigl(1 - F(\tilde{a}_i) \bigl)^{\ell-1} (F(\tilde{a}_i) - F({a}_i))^{n_2 - \ell -1 }f(\tilde{a}_i)f({a}_i)\\
    =& (n_1 - n_2) \tbinom{n_1-1}{n_2-1} F^{n_1 - n_2 -1}({a}_i) \sum_{\ell=1}^{n_2-1}  \tbinom{n_2-1}{\ell-1} (n_2 - \ell)( V_{\ell}- V_{\ell+1})\bigl(1 - F(\tilde{a}_i) \bigl)^{\ell-1}(F(\tilde{a}_i) - F({a}_i))^{n_2 - \ell -1 }f(\tilde{a}_i)f({a}_i)\\
    \geq & 0,
\end{align*}
where the least equality holds by the equation $\tbinom{n_1-1}{n_2-1} (n_2 - \ell) = \tbinom{n_2-1}{\ell}\ell$.
Hence
\begin{align*}
    \frac{\partial S(a_i,\tilde{a}_i\mid n_1,n_2)}{\partial \tilde{a}_i} \leq & \frac{\partial S(\tilde{a}_i,\tilde{a}_i\mid n_1,n_2)}{\partial \tilde{a}_i}\\ 
    = & -\tbinom{n_1-1}{n_2-1}\sum_{\ell=2}^{n_2-1}  \tbinom{n_2-1}{\ell-1} (\ell -1)V_{\ell}\bigl((1 - F(\tilde{a}_i) \bigl)^{\ell-2}F'(\tilde{a}_i)Q(\tilde{a}_i,\tilde{a}_i\mid n_1,n_2,n_2 - \ell)\\
    & + \tbinom{n_1-1}{n_2-1}\sum_{\ell=1}^{n_2-1}  \tbinom{n_2-1}{\ell-1}V_{\ell} \bigl(1 - F(\tilde{a}_i) \bigl)^{\ell-1} \frac{\partial Q(\tilde{a}_i,\tilde{a}_i\mid n_1,n_2,n_2 - \ell)}{\partial \tilde{a}_i}\\
    & - (n_2 - 1)\tbinom{n_1-1}{n_2-1}V_{n_2} F^{n_1-n_2}(a_i)\left((1 - F(\tilde{a}_i) \right)^{n_2-2}f(\tilde{a}_i)\\
    = & -\tbinom{n_1-1}{n_2-1}\sum_{\ell=2}^{n_2-1}   \frac{\tbinom{n_2-1}{\ell-1}}{\tbinom{n_1-\ell}{n_2-\ell}} (\ell -1)V_{\ell}\bigl((1 - F(\tilde{a}_i) \bigl)^{\ell-2}F^{n_1 - \ell}(\tilde{a}_i) f(\tilde{a}_i) \\
    & + \tbinom{n_1-1}{n_2-1}\sum_{\ell=2}^{n_2}   \frac{\tbinom{n_2-1}{\ell-1}}{\tbinom{n_1-\ell}{n_2-\ell}} (n_1-\ell)V_{\ell}\bigl((1 - F(\tilde{a}_i) \bigl)^{\ell-1}F^{n_1 - \ell -1}(\tilde{a}_i) f(\tilde{a}_i)\\
    =& -\sum_{\ell=2}^{n_2} V_{\ell} F'_{(\ell - 1,n_1-1)} + \sum_{\ell=1}^{n_2 - 1} V_{\ell} F'_{(\ell,n_1-1)}\\
     =&\sum\limits_{\ell=1}^{n_2-1}(V_\ell-V_{\ell+1})F'_{(\ell,n_1-1)}(\tilde{a}_i),
\end{align*}
where $Q(\tilde{a}_i,\tilde{a}_i\mid n_1,n_2,n_2 - \ell)$ and $\partial Q(\tilde{a}_i,\tilde{a}_i\mid n_1,n_2,n_2 - \ell)/\partial \tilde{a}_i$ are obtained by Lemma \ref{Auxiliary Lemma_derivative}, and the penultimate equality uses the fact that $\tbinom{n_1-1}{n_2-1} \tbinom{n_2-1}{\ell-1} = \tbinom{n_1-\ell}{n_2-\ell} \tbinom{n_1-1}{\ell-1}$ and 
    $$F'_{(\ell,n_1-1)}(x) - F'_{(\ell-1,n_1-1)}(x)=\tbinom{n_1-1}{\ell-1}\left(\big(1 - F(x) \big)^{\ell-1} F^{n_1-\ell}(x)\right)'.$$
This completes the proof of the lemma.
\end{proof}

\section{A Two-Stage Model and Relevant Proofs}
\label{sec_appendix_sec}
In this section, we give the formal definition of a two-stage model and present its analysis. %

\subsection{A Two-Stage Model}

Consider a two-stage model where a set of $n_1 \in \mathbb{N}_{>0}$ ($n_1\geq 2$) players (denoted as set $\mathcal{I}^1$) enter the first-stage contest and depending on the first-stage efforts, the top $n_2\in \mathbb{N}_{>0}$ ($2\leq n_2\leq n_1$) players (denoted as set $\mathcal{I}^2$) proceed to the second stage to compete for $p\in \mathbb{N}_{>0}$ (wlog, $p = n_2$) prizes. The value of the $\ell^{\textrm{th}}$ prize is denoted by $V_\ell$, where $V_1 \geq V_2 \geq \dots \geq V_{n_2} \geq 0$.
For $i \in \mathcal{I}^t$, let $\mathcal{I}_{-i}^t = \left\{j: j\in \mathcal{I}^t, j\neq i \right\}$ as the set of players in stage $t \in \{1,2\}$ of the contest except for player $i$.

At each stage $t\in\{1,2\}$ of the contest, a player $i\in\mathcal{I}^t$ exerts an effort $e_i^t$ which incurs a cost $g(e_i^t)/a_i$ where $g: \mathbb R_{\ge0} \mapsto \mathbb R_{\ge0}$ is a strictly increasing, continuous and differentiable function with $g(0)=0$ and $a_i$ is a type/ability parameter. For notation brevity, here we assume that the first-stage cost function is the same as the second-stage one. {Our result still holds when two stages' cost functions are different.}
The ability $a_i$ is private information to player $i$ herself. At the beginning of the first stage, each player perceives other players' abilities as i.i.d. random variables drawn from a commonly known distribution function $F(\cdot)$ with continuous density $f(\cdot)$. Let $A_i$ be the random variable (perceived by other players) of player $i$'s ability. Note that the designer holds the same prior belief. Without loss of generality, we assume that $a_i\in[0,1]$.

At each stage, eligible players will simultaneously and privately choose their efforts. Players whose first-stage efforts are among the top $n_2$ of all first-stage efforts enter the second stage. The player with the highest second-stage effort wins the first prize, $V_1$; similarly, the player with the second-highest second-stage effort wins the second prize, $V_2$, and so on until all the prizes are allocated. Player $i$ with ability $a_i$ gets utility $V_\ell - (g(e_i^1) + g(e_i^2))/a_i$ if she wins the $\ell^{\textrm{th}}$ prize with efforts $e_i^1$ and $e_i^2$ in both stages; if player $i$ does not win any prize, she gets utility $-g(e_i^1)/a_i$ in the case where she gets eliminated after the first stage, or $- (g(e_i^1) + g(e_i^2))/a_i$ in the case where she enters the second stage but does not land any prize. Players are strategic and risk-neutral, they choose their efforts in both stages in order to maximize their expected utilities, where the expectations are taken over the randomness of competing players' efforts. %
We will give a formal definition of expected utilities later. %

\medskip

\noindent\textbf{Belief System.} We assume that at the end of the first stage, the contest designer will publicly announce who is advanced and who is eliminated.
For any $i \in \mathcal{I}^1$, let $s_i \in \{0,1\}$ be the \textit{signal} of player $i$, where $s_i=1$ is promoted to attend second-stage contest, and $s_i=0$ otherwise. For $i\in \mathcal{I}^1$, let $s_{-i}=[s_j:j\in \mathcal{I}^1_{-i}]$ and $s = [s_i,s_{-i}]$. 

For any admitted player $i\in \mathcal{I}^2$, let $\beta_i(a_j\mid s,e_i^1)$ denote her posterior (second-stage) belief (probability density) about admitted player $j\in \mathcal{I}_{-i}^2$. For $i\in \mathcal{I}^2$, let $a_{-i}^2=[a_j:j\in \mathcal{I}_i^2]$ be all other admitted players' abilities expected for admitted player $i$. Let $A_{-i}^2=[A_j:j\in \mathcal{I}_{-i}^2]$ be the corresponding random variables. With a slight abuse of notation, for any $i\in \mathcal{I}^2$, denote by $\beta_i(a_{-i}\mid s,e_i^1)$ the admitted player $i$'s belief (joint probability density) about all other $n_2-1$ players' abilities conditional on admission signal $s$ and her first-stage effort $e_i^1$.

\medskip

\noindent\textbf{Equilibrium.} 
Let $e^1 = [e_i^1:i \in \mathcal{I}^1]$ be all players' first-stage efforts, and $e^2=[e_i^2:i\in \mathcal{I}^2]$ be all admitted players' second-stage efforts.

\begin{definition}[Formal Definition of PBE]
\label{def_PBE in SEC}
A perfect Bayesian equilibrium (PBE) is a tuple of a strategy $[e^1,e^2]$ and posterior beliefs $\left[\beta_i(\cdot\mid s, e_i^1): i\in\mathcal{I}^2\right]$ that satisfies the following conditions.

\begin{enumerate}%
\item[(i)] \emph{Bayesian Updating:} for every player $i \in \mathcal{I}^2$, Bayes' rule is used to update her posterior belief $\beta_i(\cdot\mid s , e_i^1)$, 
\begin{align*}
    \beta_i(a_{-i}^2 \mid s ,e_i^1) = \beta_i(a_{-i}^2 \mid s_{-i},s_{i},e_i^1) = \frac{\Pr\left(s_{-i} \mid a_{-i}^2,s_{i}, e_i^1\right)\prod_{j\in\mathcal{I}^2_{-i}}f(a_j)}{\int_{a_{-i}^2}\Pr\left(s_{-i} \mid a_{-i}^2,s_{i}, e_i^1\right)\prod_{j\in\mathcal{I}^2_{-i}}f(a_j)da_{-i}^2}. %
\end{align*}

The last equation holds for a similar reason in the proof of Proposition \ref{prop_post_belief}.

\item[(ii)]  \emph{Sequential Rationality:} player $i\in\mathcal{I}^t$’s expected utility starting at any stage $t\in\{1,2\}$ of the contest is maximized by playing $e_i^t$ given her belief at stage $t$,
\begin{align*}
    e_i^2 \in \normalfont{\argmax}_{\Tilde{e}_i^2}~\sum_{\ell=1}^p P_{i,\ell}\cdot V_\ell  - \frac{g\left(\Tilde{e}_i^2\right)}{a_i},\quad
    e_i^1 \in \normalfont{\argmax}_{\Tilde{e}_i^1}~P_i\cdot u_i^2  - \frac{g\left(\Tilde{e}_i^1\right)}{a_i},
\end{align*}
where $u_i^2 := \normalfont{\max}_{\Tilde{e}_i^2}~ \sum_{\ell=1}^p P_{i,\ell}\cdot V_\ell  - g\left(\Tilde{e}_i^2\right)/a_i$. $P_{i,\ell}$ is the probability of a player $i\in \mathcal{I}^2$ winning the $l^\textrm{th}$ prize based on her second-stage belief, and $P_{i}$ is the probability of a player $i\in\mathcal{I}^1$ advancing to the second stage based on her first-stage belief. 
\end{enumerate}    
\end{definition}

\smallskip

We now give the formal definition of a symmetric and strictly increasing PBE stated in Proposition \ref{prop_non-existenc_sec}.
\begin{definition}[Symmetric and Strictly Increasing PBE]
\label{def_symmetric and strictly monotone PBE}
We call a PBE, which is a tuple of a strategy $[e^1,e^2]$ and posterior beliefs $\left[\beta_i(\cdot\mid s, e_i^1): i\in\mathcal{I}^2\right]$ satisfying Bayesian updating and sequential updating, symmetric and strictly increasing if and only if for all $i\in \mathcal{I}^t$, $e_i^t=b^t(a_i), t=1,2$, where $b^t(\cdot)$ is strictly increasing and $b^t(0)=0$.
\end{definition}

\subsection{Non-Existence Proof}
\label{subsec:non_existence}

\begin{proof}[\textbf{Proof of Proposition \ref{prop_non-existenc_sec}}]
The proof includes three steps: i) we first characterize the posterior beliefs based on monotone and symmetric first-stage strategy; ii) when $v(\cdot\mid n_2)$ is not increasing, there does not exist 
symmetric and strictly increasing PBE;
iii) when $v(\cdot\mid n_2)$ is increasing, we derive the unique symmetric and strictly increasing PBE if existing, but we show that there is always an incentive for players to deviate from the above strategy. 

\medskip

\noindent\textit{\underline{Step 1: Posterior beliefs based on symmetric and strictly increasing first-stage strategy}}

\smallskip

\begin{lemma}
\label{lemma_sec_posterior}
Assuming that all players in the first stage follow the symmetric and strictly increasing strategy $b^1(\cdot)$, i.e., $e^1_i = b^1(a_i),\forall i \in \mathcal{I}^1$, player $i$'s ($i\in \mathcal{I}^2$) posterior belief in the equilibrium path is
\begin{align*}
\beta_i(a_{-i}^2 \mid s ,e_i^1) = 
\begin{cases}
\frac{F^{n_1-n_2}(a_i)}{I(F(a_i),n_2)}\prod\limits_{j \in \mathcal{I}_{-i}^2}f(a_j), \qquad & a_i \leq \normalfont{\min}_{j\in \mathcal{I}_{-i}^2} a_j , \\[4mm]
\frac{F^{n_1-n_2}\left(\normalfont{\min}_{j\in \mathcal{I}_{-i}^2} a_j\right)}{I(F(a_i),n_2)}\prod\limits_{j \in \mathcal{I}_{-i}^2}f(a_j), \quad \qquad & a_i > \normalfont{\min}_{j\in \mathcal{I}_{-i}^2} a_j.
\end{cases}
\end{align*}
\end{lemma}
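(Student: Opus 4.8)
The plan is to reduce the second-stage belief computation to the one already carried out for Proposition \ref{prop_post_belief}, exploiting the fact that a symmetric, strictly increasing first-stage strategy turns effort into an order-isomorphic image of ability. The entire argument lives on the equilibrium path, where every player $j$ plays $e_j^1=b^1(a_j)$.

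First I would record the basic consequence of strict monotonicity: the map $a\mapsto b^1(a)$ is a strictly increasing bijection on $(0,1)$, so for any two players $j,k$ we have $e_j^1>e_k^1\iff a_j>a_k$. Hence the ranking of first-stage efforts coincides exactly with the ranking of abilities, and the set of the $n_2$ highest efforts is precisely the set of the $n_2$ highest abilities. In particular the admission signal $s$ carries the same information as the event ``the admitted players are the top $n_2$ in ability.'' Moreover, because $b^1$ is invertible, conditioning on $e_i^1$ is the same as conditioning on $a_i=(b^1)^{-1}(e_i^1)$, so in the Bayes formula of Definition \ref{def_PBE in SEC}(i) I may replace $e_i^1$ throughout by $a_i$ without altering the conditional density.

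Next I would evaluate the only nontrivial term, $\Pr(s_{-i}\mid a_{-i}^2,s_i,e_i^1)$, under the maintained assumption that the opponents use $b^1$. Given that player $i$ is admitted and the $n_2-1$ named opponents have abilities $a_{-i}^2$, the signal $s_{-i}$ holds iff each of the $n_1-n_2$ eliminated players has effort below the smallest admitted effort, i.e. ability below $\min\{a_i,\min_{j\in\mathcal{I}_{-i}^2}a_j\}$. This is exactly the event analyzed in the proof of Proposition \ref{prop_post_belief}: when $a_i<\min_j a_j$ the conditioning on $s_i$ already forces all eliminated players below $a_i$, so the probability equals $1$; when $a_i>\min_j a_j$ it equals $F^{n_1-n_2}(\min_j a_j)$. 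Consequently the numerator and the normalizing integral are term-for-term identical to those in Proposition \ref{prop_post_belief} (the integral being handled by Lemma \ref{Auxiliary Lemma_Integral}), the same denominator $I(F(a_i),n_1,n_2)$ emerges, and substituting back yields the claimed two-case formula.

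The delicate point — and the place I would be most careful — is that the reduction is valid \emph{only} on the equilibrium path, where all opponents play $b^1$; this is precisely what lets me identify eliminated-effort events with ability events. I would stress that the lemma says nothing about off-path beliefs: if player $i$ herself deviates to some $e_i^1\neq b^1(a_i)$, then $a_i$ and $e_i^1$ decouple, the admission threshold for the eliminated players becomes $\min\{e_i^1,\,b^1(\min_j a_j)\}$ rather than $b^1(\min\{a_i,\min_j a_j\})$, and the clean reduction to Proposition \ref{prop_post_belief} no longer holds. Flagging this dependence of the posterior on the deviator's own effort is exactly what sets up the contradiction in Step 3 of Proposition \ref{prop_non-existenc_sec}.
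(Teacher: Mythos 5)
Your proof is correct and follows essentially the same route as the paper, which simply observes that a symmetric strictly increasing first-stage strategy makes the effort ranking coincide with the ability ranking on the equilibrium path and then invokes Proposition \ref{prop_post_belief}; you merely spell out the reduction (and the off-path caveat) in more detail than the paper does.
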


\begin{proof}[Proof]
Assuming that all players adopt the symmetric and strictly increasing strategy in the first stage, in the equilibrium path, the ranking of first-stage efforts is exactly the same as the ranking of all players' abilities. In this case, the posteriors would be the same as the posterior beliefs characterized in Proposition \ref{prop_post_belief}.
\end{proof}

\smallskip

\noindent\textit{\underline{Step 2: There does not exist 
symmetric and strictly increasing PBE when $v(\cdot\mid n_2)$ is not increasing}}

\smallskip

\noindent If the posterior beliefs are given in Lemma \ref{lemma_sec_posterior}, then due to the same reason as in Proposition \ref{thm_equilibrium strategy_general_prize}, there does not exist a strictly increasing second-stage equilibrium strategy, which implies that there does not exist symmetric and strictly increasing two-stage PBE.

\medskip

\noindent\textit{\underline{Step 3: There does not exist 
symmetric and strictly increasing PBE when $v(\cdot\mid n_2)$ is increasing}}

\medskip

\noindent\textit{\underline{Part (a): symmetric and strictly increasing PBE if exists}}

\begin{lemma}
\label{lemma_sec_eq}
If there exists symmetric and strictly increasing PBE, the posterior beliefs are given in Lemma \ref{lemma_sec_posterior}, and $v(\cdot\mid n_2)$ is increasing, then the unique equilibrium strategies are given as follows. For any player $i\in \mathcal{I}^2$, her second-stage effort is
\begin{align*}
   \label{Eq_2nd Equilibrium Strategy}
    b^2(a_i) =g^{-1}\left(   \int_{0}^{a_i}v(x\mid n_2)\sum\limits_{\ell=1}^{n_2-1}(V_\ell-V_{\ell+1})dF_{(\ell,n_1-1)}(x)\right).
\end{align*}
For any player $i\in \mathcal{I}^1$, her first-stage effort is
\begin{equation*}
   b^1(a_i) = g^{-1}\biggl(u_i^2  \int_0^{a_i}x dF_{(n_2,n_1-1)}(x)  \biggl),
\end{equation*}
where $u_i^2$ is the expected utility of player $i$ starting at the second stage under strategy $b^2(a_i)$,
\[
u_i^2 =\int_{0}^{a_i}\sum\limits_{\ell=1}^{n_2-1}(V_\ell-V_{\ell+1})F_{(\ell,n_1-1)}(x)dv(x\mid n_2).\]
\end{lemma}

\begin{proof}[Proof]
Assuming that $v(\cdot\mid n_2)$ is increasing, and there exists a symmetric and strictly increasing PBE, we derive $b^2(\cdot)$ and $b^1(\cdot)$ by backward induction.  This is doable since if all players follow the symmetric and strictly increasing strategy in the first stage, the ranking of the first-stage efforts is exactly the ranking of players' abilities, i.e., the posterior beliefs are independent of players' first-stage efforts. In this case, the second-stage equilibrium is exactly the same as the equilibrium under a one-round elimination contest characterized in Proposition \ref{thm_equilibrium strategy_general_prize}.

In the first stage, every player $i \in \mathcal{I}^1$  maximizes the following objective function based on her first-stage belief,
\[
    \max_{e_i^1}~ P_i\left(\sum_{\ell=1}^{n_2} V_\ell P_{i,\ell} - \frac{g(b^2(a_i))}{a_i}\right) - \frac{g(e_i^1)}{a_i}  = \max_{e_i^1} ~  P_i u_i^2 - \frac{g(e_i^1)}{a_i},
\]
where 
\begin{align*}
    u_i^2 &=\sum\limits_{\ell=1}^{n_2} \frac{V_{\ell}}{J(F(a_i),n_2)} \tbinom{n_1-1}{\ell-1}\bigl(1 - F(a_i)\bigl)^{\ell-1}F^{n_1-\ell}(a_i) - \frac{g(b^2(a_i))}{a_i}\\
    & = \frac{1}{J(F(a_i),n_2)}\sum\limits_{\ell=1}^{n_2-1}(V_\ell-V_{\ell+1})F_{(\ell,n_1-1)} - \frac{1}{a_i} \int_{0}^{a_i}v(x\mid n_2)\sum\limits_{\ell=1}^{n_2-1}(V_\ell-V_{\ell+1})dF_{(\ell,n_1-1)}(x)\\
    & = \int_{0}^{a_i}\sum\limits_{\ell=1}^{n_2-1}(V_\ell-V_{\ell+1})F_{(\ell,n_1-1)}(x)dv(x\mid n_2),
\end{align*}
and 
\begin{align}
    P_i & = \sum_{\ell=1}^{n_2} \Pr\Big(\text{$e_i^1$ ranks $\ell^\textrm{th}$ highest in $\{b^1(A_j): j\in\mathcal{I}_{-i}^1\}\cup\{e_i^1\}$}\Big) \nonumber \\
    &= \sum_{\ell=1}^{n_2} \binom{n_1-1}{\ell-1} F^{n_1-\ell}(\gamma_1(e_i^1)) \bigl(1 - F(\gamma_1(e_i^1)) \bigl)^{\ell-1},\nonumber\\
    & = \sum_{\ell=1}^{n_2}\bigl(F_{(\ell,n_1-1)}(\gamma_1(e_i^1)) - F_{(\ell-1,n_1-1)}(\gamma_1(e_i^1)\bigl)  = {F}_{(n_2,n_1-1)}(\gamma_1(e_i^1)), \nonumber
\end{align}
where $\gamma_1(\cdot)$ is the inverse function of $b^1(\cdot)$.

Since $u_i^2$ is independent of $e_i^1$, following the same logic as deriving the second-stage equilibrium strategy (first derive the first-order condition, then solve the differential equation), we get the unique first-stage equilibrium effort as follows.
\begin{equation*}
\label{Eq_Proof_First-stage Equilibrium}
     b^1(a_i) = g^{-1}\biggl(u_i^2  \int_0^{a_i}x dF_{(n_2,n_1-1)}(x)  \biggl).
\end{equation*}   
This completes the proof of the lemma.
\end{proof}

\smallskip

\noindent\textit{\underline{Part (b): Players always have the incentive to deviate}}

\smallskip

We now show that there is always incentive for player $i$ to deviate from {$\left\{b^1(\cdot),b^2(\cdot)\right\}$} given all other players follow the strategy {$\left\{b^1(\cdot),b^2(\cdot)\right\}$}. Firstly, given all players follow $b^1(\cdot)$ in the first stage, no admitted player has the incentive to deviate from $b^2(\cdot)$ in the second stage. However, we will show that players do have the incentive to deviate from $b^1(\cdot)$. 

Given all other players follow {$\left\{b^1(\cdot),b^2(\cdot)\right\}$}, suppose that (admitted) player $i$ in the first-stage deviates from $b^1(a_i)$ to $\Tilde{e}_i^1\neq b^1(a_i)$. Since $b^1(\cdot)$ is increasing, we assume that $\Tilde{e}_i^1 = b^1(\Tilde{a}_i)$, $\Tilde{a}_i\neq a_i$. Given all other players follow {$\left\{b^1(\cdot),b^2(\cdot)\right\}$}, i) we first derive the  posterior beliefs of player $i\in \mathcal{I}^2$ based on her deviated first-stage efforts $b^1(\Tilde{a}_i)$; ii) we then use backward induction to derive player $i$'s optimal second-stage strategy $\Tilde{b}^2(a_i)$; iii) finally we show that $a_i \notin \argmax_{\Tilde{a}_i} u_i(b^1(\Tilde{a}_i), \Tilde{b}^2(a_i))$, where $u_i(b^1(\Tilde{a}_i), \Tilde{b}^2(a_i))$ is player $i$'s expected utility given player $i$ follows {$\left\{b^1(\Tilde{a}_i), \Tilde{b}^2(a_i)\right\}$} and all other players follow {$\left\{b^1(\cdot),b^2(\cdot)\right\}$}. 

\smallskip

Denote $\gamma_2(\cdot)$ as the inverse function of $b^2(\cdot)$.  
Given all players follow $b^1(\cdot)$ in the first stage, if player $i\in \mathcal{I}^2$ with ability $a_i$ deviates to $b^1(\Tilde{a}_i)$, her posterior beliefs become
\begin{align*}
    \beta_i(a_{-i}^2 \mid s ,b^1(\Tilde{a}_i)) = \frac{F^{n_1-n_2}\left(\normalfont{\min}_{j\in \mathcal{I}_{-i}^2} a_j \wedge \tilde{a}_i\right)}{I(F(\tilde{a}_i)\mid n_2)}\prod\limits_{j \in \mathcal{I}_{-i}^2}f(a_j)
\end{align*}

We now derive player $i$'s optimal second-stage strategy $\Tilde{b}^2(a_i)$ based on player $i$'s deviated first-stage effort $b^1(\Tilde{a}_i)$. Let us consider player $i$'s second-stage utility function
\begin{equation*}
	\max_{e_i^2}~ \sum_{\ell=1}^{n_2} V_\ell P_{i,\ell}(\gamma_2(e_i),\tilde{a}_i) - \frac{g(e_i^2)}{a_i},
\end{equation*}
where    
\begin{align}
		P_{i,\ell}(\gamma_2(e_i),\tilde{a}_i) & = \Pr\Big(\textrm{$e_i^2$ ranks $\ell^\textrm{th}$ highest in $\left\{b^2(A_j): j\in\mathcal{I}_{-i}^2\right\}\cup\left\{e_i^2\right\}$}\Big) \nonumber \\[2mm]
		& = \tbinom{n_2-1}{\ell-1} \underbrace{\int_{\gamma_2(e_{i}^2)}^1\cdots \int_{\gamma_2(e_{i}^2)}^1}_{\ell-1} \underbrace{\int_0^{\gamma_2(e_{i}^2)}\cdots\int_0^{\gamma_2(e_{i}^2)}}_{n_2-\ell} \beta_i(a_{-i}^2 \mid s ,b^1(\Tilde{a}_i))da_{-i}^2 \nonumber.
	\end{align} Putting $P_{i,\ell}(\gamma_2(e_i),\tilde{a}_i)$ back, player $i$'s second-stage utility function is given by    
     \begin{equation}\label{non existence 0}
     	\max_{e_i^2} ~\sum_{\ell=1}^{n_2} \tbinom{n_2-1}{\ell-1} V_\ell\underbrace{\int_{\gamma_2(e_{i}^2)}^1\cdots \int_{\gamma_2(e_{i}^2)}^1}_{\ell-1} \underbrace{\int_0^{\gamma_2(e_{i}^2)}\cdots\int_0^{\gamma_2(e_{i}^2)}}_{n_2-\ell} \beta_i(a_{-i}^2 \mid s ,b^1(\Tilde{a}_i))da_{-i}^2 - \frac{g(e_i^2)}{a_{i}}. 
     \end{equation}
Our ultimate goal is to derive $u_i(b^1(\Tilde{a}_i), \Tilde{b}^2(a_i))$, thus we don't need to derive the closed-form solution of \eqref{non existence 0}. With a little abuse of notation, denote $\tilde{b}^2(a_i,\tilde{a}_i)$ as the optimal solution of the above optimization problem. We write it as a function of $a_i$ and $\Tilde{a}_i$ to emphasize its dependency on deviated ability $\Tilde{a}_i$. Define $\eta=\gamma_2\circ \tilde{b}^2$ as the composition of function $\gamma_2(\cdot)$ and function $\tilde{b}_2(\cdot)$. Since when $\Tilde{a}_i=a_i$, $\tilde{b}^2(a_i,\tilde{a}_i)  =b^2(a_i)$, we have $\eta(a_i,\tilde{a}_i) = a_i$.
Recall that, 
\[
b^2(a_i) =g^{-1}\left(   \int_{0}^{a_i}v(x\mid n_2)\sum\limits_{\ell=1}^{n_2-1}(V_\ell-V_{\ell+1})dF_{(\ell,n_1-1)}(x)\right),
\]
hence 
\begin{align}
    \label{non existence 2}
 g\left(\tilde{b}^2(a_i,\tilde{a}_i)\right)  = g(b^2\circ \gamma_2 \circ \tilde{b}^2(a_i,\tilde{a}_i)) = g(b^2(\eta(a_{i},\tilde{a}_i))) =   \int_0^{\eta(a_{i},\tilde{a}_i)} v(x\mid n_2)\sum\limits_{\ell=1}^{n_2-1}(V_\ell-V_{\ell+1})dF_{(\ell,n_1-1)}(x).
\end{align}
Define $\tilde{u}_i^2(\tilde{b}^2(a_i,\tilde{a}_i))$ as the expected maximum second-stage utility of player $i$ given her deviated first-stage effort $b^1(\tilde{a}_i)$. Formally, we have
\begin{align*}
\tilde{u}_i^2(\tilde{b}^2(a_i,\tilde{a}_i)) =\max_{e_i^2} ~\sum_{\ell=1}^{n_2} P_{i,\ell}(\gamma_2(e_i),\tilde{a}_i) - \frac{g(e_i^2)}{a_{i}} 
=\sum_{\ell=1}^{n_2} 
P_{i,\ell}(\gamma_2\circ\tilde{b}^2(a_i,\tilde{a}_i),\tilde{a}_i)- \frac{g\left(\tilde{b}^2(a_i,\tilde{a}_i)\right)}{a_{i}}.
\end{align*}
By definition, we have $\gamma_2\left(\tilde{b}^2(a_i,\tilde{a}_i)\right) = \eta(a_i,\tilde{a}_i)$, and replace $g\left(\tilde{b}^2(a_i,\tilde{a}_i)\right)$ by equation \eqref{non existence 2}, 
\begin{align*}
\tilde{u}_i^2(\tilde{b}^2(a_i,\tilde{a}_i)) =&
\sum_{\ell=1}^{n_2} V_\ell
P_{i,\ell}(\eta(a_{i},\tilde{a}_i),\tilde{a}_i)
- \frac{1}{a_{i}} \int_0^{\eta(a_{i},\tilde{a}_i)} v(x\mid n_2)\sum\limits_{\ell=1}^{n_2-1}(V_\ell-V_{\ell+1})dF_{(\ell,n_1-1)}(x).
\end{align*}
We now have player $i$'s expected (overall) utility based on deviated first-stage effort $b^1(\tilde{a}_i)$ and optimal second-stage strategy $\tilde{b}^2(a_i,\tilde{a}_i)$ as follows.
\begin{align*}
  u_i\left(b^1(\Tilde{a}_i), \Tilde{b}^2(a_i, \tilde{a}_i)\right) = P_i(b^1(\tilde{a}_i))\cdot \tilde{u}_i^2(\tilde{b}^2(a_i,\tilde{a}_i)) - \frac{g(b^1(\tilde{a}_i))}{a_i},
\end{align*}
where $P_i(b^1(\tilde{a}_i))$ is player $i$'s promotion probability based on her first-stage effort $b^1(\tilde{a}_i)$ (given other players still follow the equilibrium strategy $b^1(\cdot)$). Thus, we have $P_i(b^1(\tilde{a}_i)) = {F}_{(n_2,n_1-1)}(\tilde{a}_i)$.
Since
\[
b^1(\tilde{a}_i)  = g^{-1}\biggl(u_i^2  \int_0^{\tilde{a}_i}x dF_{(n_2,n_1-1)}(x)  \biggl),
\]
where
\begin{align*}
u_i^2&=\sum\limits_{\ell=1}^{n_2} V_\ell P_{i,\ell}(\gamma_2(b^2(a_i)),a_i) - \frac{g(b^2(a_i))}{a_i}=\int_{0}^{a_i}\sum\limits_{\ell=1}^{n_2-1}(V_\ell-V_{\ell+1})F_{(\ell,n_1-1)}(x)dv(x\mid n_2).\nonumber
\end{align*} 
Putting all of them back to the formula of $ u_i\left(b^1(\Tilde{a}_i), \Tilde{b}^2(a_i,\tilde{a}_i)\right)$, we have
\begin{align*}
L(\tilde{a}_i, a_i):=&u_i\left(b^1(\Tilde{a}_i), \Tilde{b}^2(a_i,\tilde{a}_i)\right)\\
\quad 	=&\left(\sum_{\ell=1}^{n_2} V_\ell
P_{i,\ell}(\eta(a_{i},\tilde{a}_i),\tilde{a}_i)
- \frac{1}{a_{i}} \int_0^{\eta(a_{i},\tilde{a}_i)} v(x\mid n_2)\sum\limits_{\ell=1}^{n_2-1}(V_\ell-V_{\ell+1})dF_{(\ell,n_1-1)}(x) \right) {F}_{(n_2,n_1-1)}(\tilde{a}_i)\\ \nonumber
	&-\frac{1}{a_i} \left(\int_{0}^{a_i}\sum\limits_{\ell=1}^{n_2-1}(V_\ell-V_{\ell+1})F_{(\ell,n_1-1)}(x)dv(x\mid n_2)\right)\int_0^{\tilde{a}_i} x d{F}^{\prime}_{(n_2,n_1-1)}(x).
\end{align*}  
Showing $L(\tilde{a}_i, a_i)/\partial \tilde{a}_i\neq 0$ at $\tilde{a}_i=a_i$ is sufficient to show $a_i \notin \argmax_{\tilde{a}_i} u_i\left(b^1(\Tilde{a}_i), \Tilde{b}^2(a_i,\tilde{a}_i)\right)$.
\begin{align*}
	&\quad \frac{\partial L(\tilde{a}_i, a_i)}{\partial \tilde{a}_i}
	\nonumber \\&=\left( \sum_{\ell=1}^{n_2} V_\ell \frac{\partial P_{i,\ell}(\eta(a_{i},\tilde{a}_i),\tilde{a}_i)}{\partial \eta(a_{i},\tilde{a}_i)} \frac{\partial}{\partial \tilde{a}_i}\eta(a_i,\tilde{a}_i) + \sum_{\ell=1}^{n_2} V_\ell \frac{\partial P_{i,\ell}(\eta(a_{i},\tilde{a}_i),y)}{\partial y}\bigg|_{y=\tilde{a}_i} \right)  {F}_{(n_2,n_1-1)}(\tilde{a}_i)\\ \nonumber
 &\quad -\frac{1}{a_{i}} \frac{\eta(a_{i},\tilde{a}_i)}{J(F(\eta(a_{i},\tilde{a}_i))\mid n_2)}\sum\limits_{\ell=1}^{n_2-1} (V_\ell - V_{\ell + 1} ){F}^{\prime}_{(\ell,n_1-1)}(\eta(a_{i},\tilde{a}_i))\frac{\partial}{\partial \tilde{a}_i}\eta(a_i,\tilde{a}_i){F}_{(n_2,n_1-1)}(\tilde{a}_i)\\
 &\quad+\left(\sum_{\ell=1}^{n_2} V_\ell
P_{i,\ell}(\eta(a_{i},\tilde{a}_i),\tilde{a}_i)
- \frac{1}{a_{i}} \int_0^{\eta(a_{i},\tilde{a}_i)} v(x\mid n_2)\sum\limits_{\ell=1}^{n_2-1}(V_\ell-V_{\ell+1})dF_{(\ell,n_1-1)}(x) \right)  {F}'_{(n_2,n_1-1)}(\tilde{a}_i) \\
 &\quad-\frac{\tilde{a}_i}{a_i} \left(\int_{0}^{a_i}\sum\limits_{\ell=1}^{n_2-1}(V_\ell-V_{\ell+1})F_{(\ell,n_1-1)}(x)dv(x\mid n_2)\right){F}^{\prime}_{(n_2,n_1-1)}(\tilde{a}_i).
\end{align*} 
When $\tilde{a}_i=a_i$, $\eta(a_{i},\tilde{a}_i)= a_i$, by the similar logic in \Cref{thm_equilibrium strategy_general_prize}, 
\begin{align*}
    \left(\frac{\partial}{\partial \eta(a_{i},\tilde{a}_i)} P_{i,\ell}(\eta(a_{i},\tilde{a}_i),\tilde{a}_i)\right)\bigg|_{\left(\eta(a_{i},\tilde{a}_i)=a_i,\tilde{a}_i=a_i\right)}=\frac{{F}'_{(\ell,n_1-1)}(a_i) - {F}'_{(\ell - 1,n_1-1)}(a_i)}{J(F(a_i)\mid n_2)},
\end{align*}
then it can be shown that
\begin{align*}
    \frac{\partial L(\tilde{a}_i, a_i)}{\partial \tilde{a}_i}\bigg|_{\tilde{a}_i = a_i}&=\left(\sum_{\ell=1}^{n_2} V_\ell \frac{\partial P_{i,\ell}(a_i,y)}{\partial y}\bigg|_{y={a}_i} \right)  {F}_{(n_2,n_1-1)}({a}_i)\\
    &=- \frac{J'(F(a_i))f(a_i)}{J^2(F(a_i))}\left(\sum_{\ell=1}^{n_2} V_\ell ({F}_{(\ell,n_1-1)} - {F}_{(\ell - 1,n_1-1)})(a_i)\right) {F}_{(n_2,n_1-1)}({a}_i)\\
    & = -\frac{J'(F(a_i))f(a_i)}{J^2(F(a_i))}\left(\sum_{\ell=1}^{n_2} (V_\ell - V_{\ell + 1}){F}_{(\ell,n_1-1)}\right){F}_{(n_2,n_1-1)}({a}_i)<0.
\end{align*}
Hence $a_i \notin \argmax_{\tilde{a}_i} u_i\left(b^1(\Tilde{a}_i), \Tilde{b}^2(a_i,\tilde{a}_i)\right)$. 

Steps 1 to 3 complete the proof. Our proof is also valid when the first-stage cost function is different from the second-stage cost function. 
\end{proof}

}

\end{document}